\documentclass[11pt]{article}

\usepackage[T1]{fontenc}
\usepackage[letterpaper,left=1.05in,right=1.05in,top=1.20in,bottom=1.05in]{geometry}
\usepackage{newtxtext}
\usepackage{amsmath,bm,mathtools}
\usepackage{amsthm}
\usepackage{newtxmath}
\usepackage{array,booktabs,tabularx}
\usepackage{graphicx}
\usepackage{microtype}
\usepackage{siunitx}
\usepackage{xcolor}
\usepackage{textcase}
\usepackage[compact]{titlesec}
\usepackage[font=small,labelfont={bf,small},labelsep=period,justification=justified]{caption}
\usepackage[numbers,sort&compress]{natbib}
\usepackage{url}
\usepackage{enumitem}
\usepackage{float}
\usepackage{rotating}
\usepackage[pdfusetitle]{hyperref}
\usepackage{cleveref}

\hypersetup{
  colorlinks=true,
  allcolors={blue!55!black},
  urlcolor={blue!55!black},
  breaklinks=true
}

\newcommand{\doi}[1]{\href{https://doi.org/#1}{\nolinkurl{doi:#1}}}

\renewcommand{\theHsection}{\arabic{section}}
\renewcommand{\theHsubsection}{\arabic{section}.\arabic{subsection}}
\renewcommand{\theHsubsubsection}{\arabic{section}.\arabic{subsection}.\arabic{subsubsection}}

\renewcommand{\thesection}{\Roman{section}}
\renewcommand{\thesubsection}{\Alph{subsection}}

\titleformat{\section}[block]
  {\centering\normalfont\normalsize\bfseries}
  {\thesection.}{0.6em}{\MakeTextUppercase}
\titleformat{name=\section,numberless}[block]
  {\centering\normalfont\normalsize\bfseries}
  {}{0em}{\MakeTextUppercase}
\titleformat{\subsection}[hang]
  {\normalfont\normalsize\itshape}
  {\thesubsection.}{0.5em}{}
\titlespacing*{\section}{0pt}{1.35em}{0.65em}
\titlespacing*{\subsection}{0pt}{0.95em}{0.4em}

\makeatletter
\renewcommand{\@maketitle}{%
  \newpage\null
  \vskip 1.4em%
  \begin{center}%
    {\LARGE\bfseries\boldmath \@title \par}%
    \vskip 0.95em%
    {\large \@author \par}%
    \vskip 0.35em%
    {\normalsize Independent researcher\par}%
    \vskip 0.25em%
    {\small ORCID: \href{https://orcid.org/0009-0003-0046-821X}{0009-0003-0046-821X}\par}%
  \end{center}%
  \par\vskip 1.1em}
\renewenvironment{abstract}{%
  \begin{center}
    {\small\bfseries ABSTRACT}\\[0.55em]
    \begin{minipage}{0.92\textwidth}
    \small
    \ignorespaces
}{%
    \end{minipage}
  \end{center}
  \vspace{1.05em}}
\makeatother

\newcounter{algorithm}

\crefname{algorithm}{Algorithm}{Algorithms}
\Crefname{algorithm}{Algorithm}{Algorithms}
\crefname{figure}{Fig.}{Figs.}
\Crefname{figure}{Fig.}{Figs.}
\crefname{table}{Table}{Tables}
\Crefname{table}{Table}{Tables}
\crefname{section}{Sec.}{Secs.}
\Crefname{section}{Sec.}{Secs.}
\crefname{appendix}{Appendix}{Appendixes}
\Crefname{appendix}{Appendix}{Appendixes}

\theoremstyle{plain}
\newtheorem{proposition}{Proposition}

\setlist{nosep}
\newcommand{\vect}[1]{\bm{#1}}
\newcommand{\mS}{\mathsf{S}}
\newcommand{\cE}{\mathcal{E}}
\newcommand{\cC}{\mathcal{C}}
\newcommand{\cQ}{\mathcal{Q}}
\newcommand{\cR}{\mathcal{R}}
\newcommand{\cD}{\mathcal{D}}
\newcommand{\cS}{\mathcal{S}}
\newcommand{\cX}{\mathcal{X}}
\newcommand{\cH}{\mathcal{H}}
\newcommand{\Lat}{\mathcal{L}}
\newcommand{\ket}[1]{|#1\rangle}

\newcommand{\braket}[2]{\langle#1|#2\rangle}
\DeclareMathOperator{\Tr}{Tr}
\newcommand{\ketbra}[2]{|#1\rangle\!\langle#2|}

\title{Digital quantum lattice Boltzmann evolution by reversible compute and open-system reset}
\author{Muhammad Idrees Khan}
\date{}

\begin{document}
\maketitle

\begin{abstract}
Turbulent fluid simulation is computationally demanding because nonlinear interactions couple a wide range of scales.
Quantum computing offers another computational model, but fluid evolution is nonlinear and dissipative whereas closed gate-based dynamics is linear and reversible.
The lattice Boltzmann method (LBM) is an attractive discrete target because streaming is a local permutation, yet conventional collision reconstructs a nonlinear equilibrium and applies dissipative relaxation.
We present a digital quantum lattice Boltzmann (QLBM) algorithm that keeps that nonlinear timestep in computational-basis registers.
Each step writes the updated populations into a clean destination bank, uncomputes the workspace, and resets the obsolete source.
The resulting channel is completely positive and trace preserving (CPTP), composes without intermediate measurement, and applies to any lattice Boltzmann stencil that can be evaluated reversibly into a clean bank.
Forced D3Q19 homogeneous isotropic turbulence, in direct numerical simulation (DNS) and Smagorinsky large-eddy simulation (LES) at Reynolds number 15000 over one million lattice steps, shows statistical agreement in mass, energy budgets, spectra, and intermittency.
The three-dimensional calculations use a digital-register emulator of the circuit.
A separate D2Q9 implementation compiles the same contract to reversible gates and agrees bit-exactly with its integer twin.
Two-bank storage and serial lattice depth limit near-term feasibility.
\end{abstract}

\section{Introduction}
Fluid simulation is computationally demanding in turbulence, where
nonlinear interactions couple a broad range of scales
\cite{pope2000turbulent}.
Direct numerical simulation (DNS) resolves those scales explicitly, at a
cost that grows rapidly with Reynolds number.
Large-eddy simulation (LES) reduces that cost by modelling unresolved
motion with a state-dependent subgrid closure.

Quantum computing offers a different computational model and has motivated
algorithms for fluid dynamics.
Closed gate-based evolution is linear and reversible, whereas fluid
evolution is nonlinear and dissipative.
A quantum fluid algorithm must therefore represent irreversible relaxation
over repeated timesteps.

The lattice Boltzmann method (LBM) replaces a direct Navier--Stokes
discretization by the evolution of discrete particle populations through
site-local collision and nearest-neighbour streaming
\cite{succi2018lbm,kruger2017lbm}.
Streaming is a reversible permutation, and macroscopic density and momentum
are local moments of the populations.
The obstacle for a quantum implementation is collision.
Conventional hydrodynamic LBM reconstructs a nonlinear equilibrium and
applies dissipative relaxation.
Practical solvers add body forcing, a choice of relaxation rates, and, for
turbulent LES, locally state-dependent dissipation.

Quantum lattice Boltzmann methods (QLBMs) treat that mismatch by
simplifying the collision model, lifting it into a larger linear system, or
changing the representation of collision
\cite{wawrzyniak2025qlbm,wawrzyniak2025dynamic,nagel2026multistep,sanavio2024carleman,sanavio2024routes,itani2024qalb,bastida2026carleman,wang2025nonlinear}.
\Cref{sec:sota} compares encodings and measurement contracts in detail.

Here a conventional nonlinear LBM timestep is written as a deterministic
quantum channel and iterated without decoding and rebuilding the fluid
after every step.
Any deterministic digital map $\Phi$ that can be reversibly evaluated into a
clean destination bank defines a two-bank compute--reset timestep.
The circuit computes $\Phi$, uncomputes the workspace, and unconditionally
resets the obsolete source bank.
The resulting channel is completely positive and trace preserving (CPTP),
reproduces $\Phi$ on computational-basis inputs, and composes without
intermediate readout (\cref{prop:cptp-reset}).
Bhatnagar--Gross--Krook (BGK), two-relaxation-time (TRT), and
multiple-relaxation-time (MRT) collision all use that contract, with one
rate, two rates, or a full diagonal of rates.
Bank reset dephases in the computational basis, so the algorithm is a
digital computational-basis method rather than an amplitude-encoded or
coherent QLBM.

The test problem is forced homogeneous isotropic turbulence (HIT), run as
DNS and as LES.
Forced D3Q19 HIT is the three-dimensional application, on a $320^3$ DNS
grid and on a $64^3$ Smagorinsky LES grid with a locally state-dependent
closure evaluated from the evolving register
\cite{yu2006mrtles,smagorinsky1963general,premnath2009forcing,chai2012mrtstrain}.
A separate D2Q9 implementation compiles the same compute--reset timestep
to reversible gates and executes those circuits exactly
(\cref{app:d2q9}).
The ideal arithmetic map agrees with classical forced MRT evolution
(\cref{eq:commuting}), and one finite-width step obeys the truncation bound
(\cref{eq:fp-bound}).
The D3Q19 HIT calculations cover $10^6$ steps on the register emulator of
the digital circuit.
The D2Q9 circuits agree bit-exactly with their integer twin under
final-only measurement.

\section{Relation to existing quantum LBM and Lindblad-simulation methods}
\label{sec:sota}
Recent QLBM formulations differ in encoding, collision nonlinearity, dissipation mechanism, and the measurement contract required for multistep evolution.
Linear advection--diffusion QLBMs have moved from single-step or reinitialized formulations to multistep schemes that avoid intermediate state extraction and reinitialization \cite{wawrzyniak2025qlbm,nagel2026multistep}.
Dynamic circuits extend the same linear ADE class \cite{wawrzyniak2025dynamic}.
Those algorithms remain linear ADE solvers.
Georgescu et al.\ describe a QLBM software framework \cite{georgescu2025qlbm}.
Schalkers and M{\"o}ller showed that common amplitude and computational-basis encodings cannot make both collision and streaming unitary, forcing careful encoding choices or repeated measurement/reinitialization \cite{schalkers2024encoding}.
Carleman lifts embed nonlinear collision in a larger linear system
\cite{itani2024qalb,li2023carleman,sanavio2024carleman,sanavio2024routes,bastida2026carleman}.
Unitary quantum linear systems algorithm (QLSA) block-encoding realizations can additionally incur a recovery probability below unity.
Khan realizes an invertible finite Carleman endpoint by autonomous GKSL dynamics \cite{khan2026lindblad}.
The present algorithm acts on the nonlinear digital map $\Phi_b$ by compute-and-reset.
Wang et al.\ use a nonlinear lattice-gas ensemble with an entropy-restoring H-step and report large-scale demonstrations \cite{wang2025nonlinear}.
Tiwari et al.\ describe hardware-realizable advection--diffusion QLBM pipelines \cite{tiwari2025realizable}. Duong et al.\ replace tomography-heavy updates by denoising/projection collisions \cite{duong2026denoising}.
Liu et al.\ simulate Lindblad dynamics by trajectory averaging without postselection \cite{liu2025lindblad}.
Signed-rail amplitude damping is a CPTP treatment of isolated diagonal MRT dissipation \cite{khan2026deterministic}.

The D3Q19 backend evaluates the timestep as register operations.
A classical Markov map admits a Kraus embedding \cite{nielsen2010quantum,watrous2018tqi}.
The nonlinear digital LBM map is that channel.
Forced MRT D3Q19 DNS and LES are the nonlinear test.
D2Q9 circuits show that the same contract compiles on a second stencil (\cref{app:d2q9}).
\Cref{tab:sota-comparison} summarizes the comparison.

\begin{table}[t]
\centering
\caption{Comparison with the most directly comparable QLBM lines.
Related methods appear in \cref{sec:sota}.
Measurement contract means the intermediate measurement and reinitialization policy.
Forced HIT is forced homogeneous isotropic turbulence at the production horizons of this paper.}
\label{tab:sota-comparison}
\small
\setlength{\tabcolsep}{4pt}
\renewcommand{\arraystretch}{1.25}
\begin{tabularx}{\textwidth}{@{}>{\raggedright\arraybackslash}p{3.2cm} >{\raggedright\arraybackslash}X >{\raggedright\arraybackslash}X c c@{}}
\toprule
Method & Encoding / collision & Dissipation & Meas.\ contract & Forced HIT \\
\midrule
Signed-rail CPTP MRT \cite{khan2026deterministic}
  & digital rails; prescribed $\Lambda$
  & CPTP amplitude damping
  & local decode$^{\mathrm{a}}$ & no \\
Schalkers--M{\"o}ller \cite{schalkers2024encoding}
  & space-time / basis encodings
  & unitary focus
  & encoding-dependent & no \\
Carleman QLBM \cite{itani2024qalb,li2023carleman,bastida2026carleman}
  & Carleman / QLSA lift
  & lifted linear
  & recovery$^{\mathrm{b}}$ & no \\
Autonomous GKSL Carleman \cite{khan2026lindblad}
  & vacuum coherences; Carleman $A_K$
  & autonomous GKSL
  & encode/decode$^{\mathrm{f}}$ & no \\
Wang et al.\ ensemble \cite{wang2025nonlinear}
  & amplitude ensemble; lattice-gas + H-step
  & ensemble collision
  & end-of-run$^{\mathrm{c}}$ & decaying$^{\mathrm{e}}$ \\
Tiwari et al.\ ADE \cite{tiwari2025realizable}
  & tensor-network ADE
  & unitary ADE
  & hardware pipeline & no \\
Nagel and L{\"o}we ADE \cite{nagel2026multistep}
  & amplitude ADE
  & linear ADE
  & none mid-run$^{\mathrm{g}}$ & no \\
\midrule
\textbf{This work}
  & digital register; compute--reset ($DdQq$; D3Q19 + D2Q9)
  & fixed-point multiply + bank reset
  & final only$^{\mathrm{d}}$ & emulator$^{\mathrm{d}}$ \\
\bottomrule
\end{tabularx}

\vspace{0.35em}
{\footnotesize
$^{\mathrm{a}}$Unit success for the exposed diagonal dissipative block.
$^{\mathrm{b}}$Typical linear-combination-of-unitaries (LCU) / block-encoding or QLSA recovery overhead.
$^{\mathrm{c}}$Intermediate reinitialization avoided by ensemble design.
$^{\mathrm{d}}$Forced HIT on the computational-basis register emulator. D2Q9 is the gate-level cross-stencil check (\cref{app:d2q9}).
$^{\mathrm{e}}$Wang et al.\ report decaying HIT.
$^{\mathrm{f}}$Initial encode and final decode. Invertible linear endpoint.
$^{\mathrm{g}}$Linear ADE; no intermediate extraction or reinitialization.
}
\end{table}

\section{Classical MRT-LBM, DNS and LES}
The D3Q19 populations $\vect{f}=(f_i)_{i=0}^{18}$ satisfy the lattice
Boltzmann equation
\begin{equation}
 f_i(\vect{x}+\vect{c}_i\Delta t,t+\Delta t)
 =f_i(\vect{x},t)+\Omega_i(\vect{x},t)+\Delta t\,\Psi_i(\vect{x},t).
 \label{eq:lbe}
\end{equation}
Here $\vect{c}_i$ are the D3Q19 discrete velocities, $\Delta t$ is the
lattice timestep, $\Omega_i$ is the MRT collision increment, and
$\Psi_i$ is the Guo force increment in population space.
Let $M$ be the D3Q19 moment basis
\cite{dhumieres2002mrt}, with inverse $M^{-1}$, and let $I$ be the
$19\times 19$ identity on moment space.
The discrete velocities, weights, the numerical matrix $M$, and the
ghost rates are those tabulated by Khan et al.\
\cite{khan2026fhit}.
The diagonal relaxation matrix is $\mS=\mathrm{diag}(s_0,\ldots,s_{18})$
and the multiplier matrix is
$\Lambda=I-\mS=\mathrm{diag}(\lambda_0,\ldots,\lambda_{18})$ with
$\lambda_r=1-s_r$.
Moments are $\vect{m}=M\vect{f}$.
The raw Guo force source is $\vect{\Phi}$, written in registers in
\cref{eq:force-register}.
The MRT-corrected population source is
\begin{equation}
 \vect{\Psi}
 =M^{-1}\Bigl(I-\tfrac{\mS}{2}\Bigr)M\vect{\Phi}.
 \label{eq:mrt-force-pop}
\end{equation}
The post-collision forced moments are
\begin{equation}
 \vect{m}^{+}
 =\vect{m}^{\mathrm{eq}}
 +\Lambda(\vect{m}-\vect{m}^{\mathrm{eq}})
 +\Delta t\Bigl(I-\tfrac{\mS}{2}\Bigr)M\vect{\Phi}.
 \label{eq:mrt-forced}
\end{equation}
The complete classical timestep is
\begin{equation}
 \cC_{\Lambda}(f)
 =\cS\Biggl[
 M^{-1}\Bigl(
 \vect{m}^{\mathrm{eq}}
 +\Lambda(\vect{m}-\vect{m}^{\mathrm{eq}})
 +\Delta t\bigl(I-\tfrac{\mS}{2}\bigr)M\vect{\Phi}
 \Bigr)
 \Biggr],
 \label{eq:classical-step}
\end{equation}
where $\cS$ is periodic streaming. External body forcing uses the
discrete source of Guo et al.\ \cite{guo2002forcing}. The density and half-force velocity are
\begin{equation}
 \rho=\sum_i f_i,
 \qquad
 \rho\vect{u}
 =\sum_i f_i\vect{c}_i+\tfrac{\Delta t}{2}\vect{F},
 \label{eq:macros}
\end{equation}
with $\vect{F}$ the body-force density, and the low-Mach equilibrium form
\begin{equation}
 f_i^{\mathrm{eq}}
 =w_i\rho\Biggl[
 1+\frac{\vect{c}_i\cdot\vect{u}}{c_s^2}
 +\frac{(\vect{c}_i\cdot\vect{u})^2}{2c_s^4}
 -\frac{\vect{u}^2}{2c_s^2}
 \Biggr]
 \label{eq:feq}
\end{equation}
uses the D3Q19 weights $w_i$ and lattice sound speed $c_s=1/\sqrt{3}$.
The same equilibrium is written in that moment basis.
\Cref{tab:mrt-modes} defines the conserved, shear, and remaining nonconserved mode groups used below, together with the diagonal rates $s_r$ and multipliers $\lambda_r=1-s_r$.
The D2Q9 circuits of \cref{app:d2q9} use the moment basis of
\cite{lallemand2000theory}.

\begin{table}[t]
\centering
\caption{D3Q19 MRT diagonal collision used throughout this work
(moment-row ordering of \cite{dhumieres2002mrt},
as tabulated in \cite{khan2026fhit}).
Multipliers are $\lambda_r=1-s_r$.
The fifteen nonconserved modes enter the signed fixed-point collision;
conserved moments are left unchanged.
Open-system irreversibility enters through the subsequent bank-reset channel.}
\label{tab:mrt-modes}
\begin{tabular}{llp{0.46\linewidth}}
\toprule
Role & Indices $r$ & Rates $s_r$ \\
\midrule
Conserved ($\rho,\vect{j}$)
  & $0,3,5,7$
  & $s_r=0$ (conserved moments unchanged) \\
Shear (viscous stress)
  & $9,11,13,14,15$
  & $s_\nu=\Delta t/\tau_0$
    (DNS: fixed molecular;
     LES: local $s_\nu(\vect{x},t)$) \\
Other nonconserved
  & $1,2,4,6,8,10,12,16,17,18$
  & fixed set:
    $s_1=1.19$,
    $s_2=s_{10}=s_{12}=1.4$,
    $s_4=s_6=s_8=1.2$,
    $s_{16}=s_{17}=s_{18}=1.98$ \\
\bottomrule
\end{tabular}
\end{table}

\subsection{DNS and LES roles}
Both calculations use the forced MRT map \cref{eq:classical-step}.
That map is the MRT-Smagorinsky FHIT construction of
\cite{khan2026fhit}, now evaluated as a digital-register timestep.
They differ in the shear multiplier $\lambda_\nu$.
For DNS that multiplier is fixed by the molecular viscosity,
\begin{equation}
 \tau_0=\frac{\nu_0}{c_s^2}+\frac{\Delta t}{2},
 \qquad
 s_\nu=\frac{\Delta t}{\tau_0},
 \qquad
 \lambda_\nu=1-s_\nu.
 \label{eq:dns-lambda}
\end{equation}
LES uses the same collide--stream step on a coarser grid.
An eddy viscosity $\nu_t=(C_s\Delta)^2|S|$ is added to $\nu_0$, and the
five shear modes receive a local $\lambda_\nu(\vect{x},t)$ from
\cref{eq:dns-lambda} with $\nu_0+\nu_t$ in place of $\nu_0$.
That discrete Smagorinsky rule is the lattice form of the filtered momentum
equation
\begin{equation}
 \frac{\partial\bar u_i}{\partial t}
 +\bar u_j\frac{\partial\bar u_i}{\partial x_j}
 =-\frac{1}{\rho}\frac{\partial\bar p}{\partial x_i}
 +\nu_0\nabla^2\bar u_i
 -\frac{\partial\tau_{ij}^{\mathrm{SGS}}}{\partial x_j}
 +\frac{\bar F_i}{\rho},
 \label{eq:filtered-nse}
\end{equation}
with $\tau_{ij}^{\mathrm{SGS},d}=-2\nu_t\bar S_{ij}$ the traceless
subgrid stress, Smagorinsky constant $C_s$, filter width $\Delta$, and
$|S|$ the strain-rate magnitude.
Here $\vect{F}$ is the body-force density of \cref{eq:macros}, so
$\vect{F}/\rho$ is the acceleration.
The strain $|S|$ is reconstructed from nonequilibrium moments following
Yu et al.\ \cite{yu2006mrtles}, as in \cite{khan2026fhit}.
\Cref{sec:les-multipliers} forms the digital shear multiplier from that $|S|$.
Each quantum-register simulation has a matched classical reference with
identical lattice, initialization, forcing, rates, duration, and
diagnostics.

\section{Digital compute--reset architecture}
\label{sec:digital-collision}
Let $\cR$ be a finite set, the alphabet of a digital register, and let
$\Phi:\cR\to\cR$ be any deterministic map that admits a reversible clean
evaluation into a second register.
At timestep $t$, write $A$ for the source bank holding $\ket{x}$, $B$ for a
clean destination bank, and $W$ for arithmetic workspace.
A reversible circuit implements
\begin{equation}
 U_{\Phi}:
 \ket{x}_{A}\ket{0}_{B}\ket{0}_{W}
 \;\longrightarrow\;
 \ket{x}_{A}\ket{\Phi(x)}_{B}\ket{0}_{W} .
 \label{eq:u-step}
\end{equation}
The map is evaluated by compute--copy--uncompute.
$\Phi(x)$ is written into workspace, copied into the destination bank, and
the workspace is uncomputed to $\ket{0}$.
The obsolete source bank is then erased by
\begin{equation}
 \mathrm{Reset}(A),
 \label{eq:reset-source}
\end{equation}
giving $\ket{0}_{A}\ket{\Phi(x)}_{B}\ket{0}_{W}$.
For the following timestep the populated destination bank is the new
source and the reset source bank is the new destination.
This is a change of register roles.
The complete algorithm is
\begin{equation}
 \boxed{
 \text{prepare }x_0
 \;\rightarrow\;
 \prod_{t=0}^{T-1}
 \left[U_{\Phi}
 \rightarrow\mathrm{Reset}(A)\right]
 \;\rightarrow\;
 \text{final measurement}
 }.
 \label{eq:boxed-algo}
\end{equation}

\begin{proposition}[Compute--reset CPTP channel]
\label{prop:cptp-reset}
Let $\Phi:\cR\to\cR$ be any deterministic map on a finite set $\cR$,
let $\cH_A\cong\cH_B\cong\mathbb{C}^{|\cR|}$ be the source and
destination banks, and let
\begin{equation}
 V:\cH_A\to\cH_A\otimes\cH_B,
 \qquad
 V\ket{x}=\ket{x}\ket{\Phi(x)},
 \label{eq:two-bank-isometry}
\end{equation}
be the reversible two-bank compute step \cref{eq:u-step}.
Then $V$ is an isometry, and the physical step
\emph{``apply $V$, then discard bank $A$''} is the channel
\begin{equation}
 \cE_{\Phi}(\rho)
 =\Tr_A\!\bigl[V\rho V^\dagger\bigr]
 =\sum_x \rho_{xx}\,\ketbra{\Phi(x)}{\Phi(x)},
 \label{eq:bank-reset-channel}
\end{equation}
with Kraus operators $K_x=\ketbra{\Phi(x)}{x}$.
Consequently $\cE_{\Phi}$ is CPTP, it acts as $\Phi$ on every
computational-basis input,
$\cE_{\Phi}(\ketbra{x}{x})=\ketbra{\Phi(x)}{\Phi(x)}$, and on an
arbitrary input it destroys all coherences in the computational basis.
\end{proposition}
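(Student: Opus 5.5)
The plan is a direct computation in the computational basis $\{\ket{x}\}_{x\in\cR}$ of $\cH_A\cong\cH_B$, followed by reading off the claimed properties.

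\emph{Isometry.} First I will show that $V$ is an isometry by evaluating $V^\dagger V$ on basis vectors:
\[
\langle y|V^\dagger V|x\rangle=\braket{y}{x}\,\braket{\Phi(y)}{\Phi(x)}=\delta_{xy}\,\braket{\Phi(x)}{\Phi(x)}=\delta_{xy}.
\]
Hence $V^\dagger V=I_A$. The point is that the retained copy of $x$ on bank $A$ makes the images orthonormal even when $\Phi$ is not injective, which is exactly why the reversible clean evaluation \cref{eq:u-step} is possible for arbitrary $\Phi$. The workspace $W$ starts and ends in $\ket{0}$, so it factors out and can be ignored.

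\emph{Channel formula.} Next I will expand $\rho=\sum_{x,y}\rho_{xy}\ketbra{x}{y}$. This gives
\[
V\rho V^\dagger=\sum_{x,y}\rho_{xy}\,\ketbra{x}{y}_A\otimes\ketbra{\Phi(x)}{\Phi(y)}_B .
\]
Taking the partial trace over $A$ contributes the factor $\braket{y}{x}=\delta_{xy}$, which leaves \cref{eq:bank-reset-channel}. Physically, $\mathrm{Reset}(A)$ followed by relabelling banks is the same as discarding $A$ and appending a fresh $\ket{0}_A$. Appending a fixed pure state does not change the reduced channel on the live bank, so the identification with the ``apply $V$, then discard $A$'' step holds.

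\emph{Kraus operators and CPTP.} Writing the trace as $\Tr_A[X]=\sum_z(\langle z|\otimes I)X(\ket{z}\otimes I)$ gives Kraus operators $(\langle z|\otimes I)V=\ketbra{\Phi(z)}{z}=K_z$. Alternatively, I can check $\sum_x K_x\rho K_x^\dagger=\sum_x\rho_{xx}\ketbra{\Phi(x)}{\Phi(x)}$ directly. Complete positivity follows from the Kraus form, or from Stinespring applied to the isometry $V$ followed by a partial trace. Trace preservation is the identity
\[
\sum_x K_x^\dagger K_x=\sum_x\ketbra{x}{x}=I,
\]
which is equivalent to $V^\dagger V=I$.

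\emph{Remaining claims.} These are read off \cref{eq:bank-reset-channel}. For $\rho=\ketbra{x}{x}$ the only nonzero diagonal entry is $\rho_{xx}=1$, so $\cE_\Phi(\ketbra{x}{x})=\ketbra{\Phi(x)}{\Phi(x)}$. The output depends only on the diagonal of $\rho$, so $\cE_\Phi=\cE_\Phi\circ\Delta$, where $\Delta$ is complete dephasing in the computational basis. All off-diagonal entries $\rho_{xy}$ with $x\neq y$ are annihilated. Moreover, the output $\sum_x\rho_{xx}\ketbra{\Phi(x)}{\Phi(x)}$ is itself diagonal, so no coherence survives in the output either.

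None of these steps is a real obstacle. The only care needed is the bookkeeping showing that the physical reset-plus-role-swap equals the partial trace, and noting that non-injectivity of $\Phi$ causes no problem because orthogonality is carried by the $A$ register.
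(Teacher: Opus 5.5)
Your proposal is correct and follows the paper's proof essentially step for step. Both arguments compute $V^\dagger V$ in the computational basis, with orthogonality carried by the retained copy of $x$ on bank $A$, so injectivity of $\Phi$ is not needed. Both then expand $V\rho V^\dagger$, take the partial trace over $A$, identify reset-plus-reuse with discarding up to a local state preparation, check $\sum_x K_x^\dagger K_x=I$, and read off the basis-input and dephasing claims. Your explicit extraction $K_z=(\langle z|\otimes I)V$ from the partial trace is a tidy addition, not a different route.
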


\begin{proof}
$V^\dagger V=\sum_{x,x'}\braket{x'}{x}\braket{\Phi(x')}{\Phi(x)}
\ketbra{x'}{x}=\sum_x\ketbra{x}{x}=I$, so $V$ is an isometry and
$\rho\mapsto V\rho V^\dagger$ is a reversible dilation.
$V$ is an isometry that can be extended to a unitary on a sufficiently
enlarged register space. No assumption of injectivity of $\Phi$ is needed, because the
source register itself carries the label $x$.
Writing $\rho=\sum_{x,x'}\rho_{xx'}\ketbra{x}{x'}$,
\[
 V\rho V^\dagger
 =\sum_{x,x'}\rho_{xx'}\,
 \ketbra{x}{x'}\otimes\ketbra{\Phi(x)}{\Phi(x')},
\]
and tracing out bank $A$ retains only $x=x'$, which gives
\cref{eq:bank-reset-channel}. Since discarding a subsystem is the partial
trace, and unconditional $\mathrm{Reset}(A)$ followed by reuse of $A$ is
physically the same operation up to a local state preparation on $A$, the
reset implements exactly this channel.
Identifying $K_x=\ketbra{\Phi(x)}{x}$ gives
$\sum_x K_x^\dagger K_x=\sum_x\ketbra{x}{x}=I$, the Kraus (operator-sum)
form of a CPTP map \cite{nielsen2010quantum,watrous2018tqi}.
Setting $\rho=\ketbra{x}{x}$ yields
$\ketbra{\Phi(x)}{\Phi(x)}$, while for $x\neq x'$ the off-diagonal
terms are annihilated, so $\cE_{\Phi}$ factors through complete
computational-basis dephasing and is entanglement-breaking.
The reset exports the discarded bank-$A$ information into the
environment. That is the dissipative part of the construction, and it is
the only irreversible element of the timestep.
\end{proof}

The Kraus form of compute-and-discard is standard
\cite{nielsen2010quantum,watrous2018tqi}.
A complete nonlinear digital LBM timestep sits in that contract, so
timesteps compose on the register state.
If $\Phi_b$ is any deterministic fixed-point LBM timestep map on a
$DdQq$ population register, and if it admits the evaluation
\cref{eq:u-step}, then \cref{eq:bank-reset-channel} implements $\Phi_b$
on computational-basis inputs.
Forced D3Q19 MRT DNS, D3Q19 Smagorinsky LES, and the D2Q9 circuits of
\cref{app:d2q9} are instances of that LBM map.
BGK and TRT occupy the same slot, with one or two relaxation rates.
Bank reset dephases in the computational basis.

The logical circuit measures at the final readout.
A logical $\mathrm{Reset}$ may be compiled as mid-circuit measurement
and a corrective $X$, with the outcome discarded.

From here on, $\cR$ is the digital LBM population space.
Let $\Lat$ be a periodic lattice in $d$ dimensions and
$\mathcal{C}=\{c_0,\ldots,c_{Q-1}\}$ a $DdQq$ velocity stencil.

\subsection{Signed MRT multiply and D3Q19 consistency}
\label{sec:mrt-multiply}
A digital-register QLBM stores one signed fixed-point population word per
velocity at each site.
Write a signed format $(w,f)$, where $w$ is the two's-complement word width
including the sign bit and $f$ is the number of fractional bits.
Its representable set is
\begin{equation}
 \mathbb{F}_{w,f}
 =\bigl\{k\,2^{-f}:\;
 -2^{\,w-1}\leq k\leq 2^{\,w-1}-1,\;k\in\mathbb{Z}\bigr\},
 \label{eq:fp-grid}
\end{equation}
and $\mathrm{sat}_{w}$ clamps an integer to $[-2^{w-1},2^{w-1}-1]$.
Throughout, $\mathrm{round}(\cdot)$ denotes round-half-away-from-zero, which
is symmetric under sign reversal.
In the main text $b$ always denotes the \emph{fractional} width $f$ of the
production registers, with integer widths fixed per register class.
Symmetry of the rounding rule is required because overrelaxation drives
$\lambda$ negative, and a floor-based rule would bias every negative
product low by one unit in the last place.

The classical state space is $\cX=(\mathbb{R}^{Q})^{\Lat}$.
The register state space is the finite set
$\cR=(\mathbb{F}_{w,f}^{\,Q})^{\Lat}$ of population fields whose entries
lie on the fixed-point grid \cref{eq:fp-grid}.
The decoding map
\[
 \cD:\cR\to\cX,
 \qquad
 (\cD F)_i(\vect{x})=F_i(\vect{x}),
\]
is the injective inclusion that reads each register word as the rational
number it represents.
The encoding map $\Pi:\cX\to\cR$ is the elementwise quantizer of the
persistent population format $(w,f)$,
\begin{equation}
 \Pi(x)=2^{-f}\,\mathrm{sat}_{w}\!\bigl(\mathrm{round}(2^{f}x)\bigr),
 \label{eq:quantizer}
\end{equation}
so that $\Pi\circ\cD=\mathrm{id}_{\cR}$.
Work registers (moments, features, sources, reciprocals, accumulators)
may use other integer widths.
Their fractional widths are at least $b$.
The persistent state space $\cR$ is the population field.
Let $\Phi_b:\cR\to\cR$ be a deterministic fixed-point LBM timestep on this
space.
Given a multiplier rule $G$, write $\cQ_{G,b}:\cR\to\cR$ for the finite-width
register update.
Write $\cQ_{G,\infty}:\cX\to\cX$ for the same mathematical timestep in exact
real arithmetic, with exact $1/\rho$ and $\sqrt{\cdot}$ in place of finite
Newton iteration, and with quantization and saturation removed.
The classical update $\cC_G:\cX\to\cX$ is the matching exact-real discrete
LBM timestep.
Both evaluate their multipliers from their own state, $\Lambda=G(\cdot)$.
The production calculations use $d=3$, $Q=19$, and the forced MRT map
of \cref{eq:classical-step}.
MRT mode relaxation is implemented by signed fixed-point multiplication.
For each nonconserved mode, write the moment deviation $\delta m$ and
the multiplier $\lambda=1-s$ as scaled two's-complement integers, with
$f_m$ and $f_\lambda$ the fractional widths of those registers,
\begin{equation}
 D=\mathrm{round}(2^{f_m}\delta m),\qquad
 L=\mathrm{round}(2^{f_\lambda}\lambda),\qquad
 \lambda=1-s,
 \label{eq:fp-integers}
\end{equation}
and compute
\begin{equation}
 D'
 =\mathrm{sat}_{w_m}\!\left[
   \mathrm{round}\!\left(\frac{D\,L}{2^{f_\lambda}}\right)
 \right].
 \label{eq:fp-multiply}
\end{equation}
The quotient in \cref{eq:fp-multiply} is evaluated from the exact integer
product $D\,L$, so the rounding is that of the infinitely precise product.
When $s>1$, one has $\lambda=1-s<0$, so $L$ is simply a negative integer.
The same signed-multiply contract implements overrelaxation.
For forced D3Q19 MRT the digital timestep is the composition
\begin{equation}
 \Phi_b
 =\mathrm{stream}\circ\mathrm{inverse\,MRT}\circ\mathrm{forcing}
 \circ\mathrm{collision}\circ\mathrm{equilibrium}\circ\mathrm{MRT},
 \label{eq:phi-b}
\end{equation}
with every arithmetic stage obeying an explicit integer width, fractional
width, rounding rule, saturation rule, and (for $1/\rho$ and $\sqrt{\cdot}$)
$J_{\mathrm{r}}=J_{\mathrm{s}}=6$ Newton iterations, sufficient to reach the register rounding.
That composition is the D3Q19 instance of $\cQ_{G,b}$.
The production backend evaluates it as register operations.
In that instantiation $\cC_G$ is the exact-real discrete forced MRT map of
\cref{eq:classical-step}, and floating-point arithmetic is the numerical
reference used to evaluate it.

In exact real arithmetic $\cQ_{G,\infty}$ recovers
the same macros $(\rho,\vect{u})$, moments $\vect{m}$, and equilibrium
$\vect{m}^{\mathrm{eq}}$ as the classical kernels, with
$\Lambda=G(f)$ constant for DNS and given by the Smagorinsky closure for
LES.
Signed multiplication yields $\Lambda(\vect{m}-\vect{m}^{\mathrm{eq}})$,
and register evaluation of the discrete force source reproduces
$\Delta t\bigl(I-\mS/2\bigr)M\vect{\Phi}$.
The same inverse moment transform and periodic streaming follow, so the two
exact-real maps coincide,
\begin{equation}
 \cQ_{G,\infty}=\cC_{G}.
 \label{eq:commuting}
\end{equation}
For a register input $F\in\cR$ this is
$\cQ_{G,\infty}(\cD F)=\cC_{G}(\cD F)$,
and the identity iterates for every number of timesteps.
For state-dependent LES, both algorithms evaluate $\Lambda_t=G(f_t)$ from
their current population states at every timestep.

If all intermediate arithmetic quantities remain within their prescribed
ranges, if every arithmetic register has
fractional width at least $b$, and if the constituent maps and the
LES rule $G$ are locally Lipschitz on the states considered, one timestep
of the finite-width register map satisfies
\begin{equation}
 \bigl\|
 \cD\cQ_{G,b}(F)-\cQ_{G,\infty}(\cD F)
 \bigr\|
 =
 \bigl\|
 \cD\cQ_{G,b}(F)-\cC_{G}(\cD F)
 \bigr\|
 \leq
 C_{\mathrm{fp}}\,2^{-b},
 \label{eq:fp-bound}
\end{equation}
with $C_{\mathrm{fp}}$ independent of $b$.
This is a one-step arithmetic bound.
Forced HIT is chaotic, so two solvers that differ by
$\mathcal{O}(2^{-b})$ at one step separate and decorrelate pointwise
regardless of $b$.
The width survey of \cref{tab:fixedpoint-robustness} shows collapse at $b=16$.
Widths $b=24$, $32$, and $40$ remain in the sustained kinetic-energy band.
Pre-chaotic kinetic-energy error falls with $b$, as expected from
\cref{eq:fp-bound}.
Long $32^3$ trajectories still mix under chaos, so pairwise final-field
$L^2$ mixes across widths.
Long-horizon fidelity is therefore assessed statistically through spectra,
budgets, and the post-transient moments of \cref{tab:post-transient}.

\section{Forced D3Q19 instantiation}
\label{sec:digital-algo}
The persistent state is a field of signed fixed-point D3Q19 population registers
\begin{equation}
 |F_t\rangle
 =\bigotimes_{\vect{x}}\bigotimes_{i=0}^{18}
 |f_i(\vect{x},t)\rangle.
 \label{eq:population-register}
\end{equation}
This section instantiates the digital LBM map of \cref{sec:digital-collision} on forced D3Q19 MRT.
The production solver is that register map.
It advances $F_t$ through the
stages of \cref{eq:phi-b} at the declared register boundaries.
Finite bit-width surveys appear in
\cref{tab:fixedpoint-robustness}
and \cref{sec:fixedpoint-results,app:fixedpoint}.
Work registers store density, momentum, reciprocal density, velocity,
moments, equilibrium features, source moments, and strain quantities.
Each numerical map corresponds to reversible arithmetic, unconditional
reset, or a permutation.
\Cref{fig:qlbm-register-dns-les} summarizes the forced D3Q19 register
timestep of \cref{tab:register-operations} and the DNS/LES construction
of its control registers.

\begin{table}[t]
\centering
\caption{Register operations in one digital QLBM timestep.
Every row has a numerical basis-register implementation and a circuit-level interpretation.
The logical circuit measures at the final readout.
Scheduled diagnostics are emulator-only.}
\label{tab:register-operations}
\begin{tabular}{p{0.22\linewidth}p{0.34\linewidth}p{0.36\linewidth}}
\toprule
Stage & Numerical map & Circuit interpretation \\
\midrule
Macroscopic recovery & $F\mapsto(\rho,\vect{j},\vect{u})$ & Reversible adder trees, fixed-point reciprocal, products \\
Moment transform & $F\mapsto M F$ & Fixed sparse linear circuits \\
Equilibrium & $(\rho,\vect{u})\mapsto m^{\mathrm{eq}}$ & Quadratic feature registers and constant multipliers \\
Nonequilibrium moments & $\delta m=m-m^{\mathrm{eq}}$ ($+\tfrac{\Delta t}{2}\Phi_m$ for LES strain) & Signed fixed-point subtraction \\
Digital collision & $\delta m\mapsto\lambda\odot\delta m$ & Signed two's-complement fixed-point multiply \\
Body forcing & $(\vect{u},\vect{F})\mapsto\vect{\Phi}\mapsto\Delta t(I-\mS/2)M\vect{\Phi}$ & Reversible contractions and source-register addition \\
Inverse transform & $m^{+}\mapsto F^{+}$ & Fixed sparse inverse map \\
Streaming & $F^{+}(\vect{x})\mapsto F(\vect{x}+\vect{c}_i)$ & Periodic modular permutation \\
Clean compute & $(F_t,0_B,0_W)\mapsto(F_t,\Phi_b(F_t),0_W)$ & Compute--copy--uncompute ($U_{\Phi}$) \\
Bank erase & $F_t\mapsto 0$ on bank $A$ & Unconditional $\mathrm{Reset}(A)$ (open-system CPTP) \\
Bank reuse & $(0_A,F_{t+1,B})$ & Use populated bank $B$ as the next source and reset bank $A$ as the next destination \\
\bottomrule
\end{tabular}
\end{table}

\begin{figure}[t]
\centering
\refstepcounter{algorithm}
\label{alg:digital-timestep}
\fbox{\parbox{0.94\linewidth}{
\textbf{Algorithm~\thealgorithm.} Forced D3Q19 instance of the two-bank compute--reset timestep.\\[0.35em]
\textbf{Input:} source $S_t=|F_t\rangle$, destination $D_t=|0\rangle$, workspace $W=|0\rangle$, and DNS rates or LES closure $G$.\\
\textbf{Output:} $S_t=|0\rangle$, $D_t=|F_{t+1}\rangle$, $W=|0\rangle$.
\begin{enumerate}[leftmargin=1.4em]
\item Recover $(\rho,\vect{u})$ and $\vect{m}=M\vect{f}$ by reversible fixed-point arithmetic, then evaluate $\vect{m}^{\mathrm{eq}}$ and $\delta\vect{m}$.
\item Form $\Lambda=G(f)$ (fixed for DNS, Smagorinsky $\lambda_\nu(\vect{x},t)$ for LES).
\item For each nonconserved mode, compute $D_r'=\mathrm{sat}(\mathrm{round}(D_r L_r/2^{f_\lambda}))$ by signed fixed-point multiply (\cref{eq:fp-multiply}).
\item Reconstruct moments, apply body forcing and inverse $M$, and stream the result into the clean destination bank $D_t$.
\item Uncompute all arithmetic workspace ($W\to|0\rangle$).
\item Apply $\mathrm{Reset}(S_t)$ to the obsolete source bank.
\item At the next timestep use $D_t$ as the source and the reset $S_t$ as the destination. For LES, recompute $\lambda_\nu$ from the new register state without classical reinjection.
\end{enumerate}
}}
\end{figure}

\subsection{Macroscopic, equilibrium, collision, forcing, and streaming}
Reversible adder trees form $\rho$ and $\vect{j}$. A fixed-point
reciprocal and products form $\vect{u}$. The fixed matrix $M$ is
implemented by sparse signed additions. Quadratic work registers
generate the D3Q19 equilibrium moments, after which
\begin{equation}
 |\vect{m}\rangle|\vect{m}^{\mathrm{eq}}\rangle|0\rangle
 \mapsto
 |\vect{m}\rangle|\vect{m}^{\mathrm{eq}}\rangle
 |\delta\vect{m}\rangle.
 \label{eq:noneq-map}
\end{equation}
The fifteen nonconserved entries enter the signed multiply
\cref{eq:fp-multiply}. The discrete force term is evaluated in registers as
\begin{equation}
 \Phi_i
 =w_i\left[
  \frac{d_i^{F}-d^{uF}}{c_s^2}
 +\frac{d_i^{u}d_i^{F}}{c_s^4}
 \right],
 \label{eq:force-register}
\end{equation}
with $d_i^{F}=\vect{c}_i\cdot\vect{F}$, $d_i^{u}=\vect{c}_i\cdot\vect{u}$, and $d^{uF}=\vect{u}\cdot\vect{F}$.
The corresponding moment-space correction
$\Delta t\bigl(I-\mS/2\bigr)M\vect{\Phi}$ is added before the inverse transform.
Periodic streaming is a fixed permutation of the explicit population words,
\begin{equation}
 F_{t+1,i}(\vect{x}+\vect{c}_i\Delta t)
 =F_i^{+}(\vect{x}),
 \label{eq:quantum-stream}
\end{equation}
equivalently
\[
 \bigotimes_{\vect{x},i}
 |f_i^{+}(\vect{x})\rangle
 \longmapsto
 \bigotimes_{\vect{x},i}
 |f_i^{+}(\vect{x}-\vect{c}_i\Delta t)\rangle.
\]
Reverse arithmetic clears work registers. Only the obsolete state bank
is erased by $\mathrm{Reset}(A)$.

\section{State-dependent LES-controlled digital multipliers}
\label{sec:les-multipliers}
The force-corrected nonequilibrium used for strain is
$\widetilde{\delta\vect{m}}=\vect{m}-\vect{m}^{\mathrm{eq}}+\tfrac{\Delta t}{2}\vect{\Phi}_m$
with $\vect{\Phi}_m=M\vect{\Phi}$
\cite{premnath2009forcing,chai2012mrtstrain}.
The D2Q9 gate-level circuits in \cref{app:d2q9} use a lattice-specific reduction of the same MRT stress relation.
With strain-rate tensor $S_{\alpha\beta}$ (distinct from the MRT matrix $\mS$) and magnitude $|S|=\sqrt{2S_{\alpha\beta}S_{\alpha\beta}}$ evaluated by fixed-point square-root iteration,
\begin{equation}
 \nu_t=(C_s\Delta)^2|S|,
 \qquad
 \nu_{\mathrm{eff}}=\nu_0+\nu_t,
 \label{eq:eddy-viscosity}
\end{equation}
the local shear multiplier is
\begin{equation}
 \tau_{\mathrm{eff}}=\frac{\nu_{\mathrm{eff}}}{c_s^2}+\frac{\Delta t}{2},
 \qquad
 s_\nu(\vect{x},t)=\frac{\Delta t}{\tau_{\mathrm{eff}}},
 \qquad
 \lambda_\nu(\vect{x},t)=1-s_\nu(\vect{x},t).
 \label{eq:effective-relaxation}
\end{equation}
The five shear modes receive $s_\nu(\vect{x},t)$. Other nonconserved rates retain their prescribed MRT values.
The strain reconstruction contains $s_\nu$.
The D3Q19 digital map removes that circular dependence by evaluating strain with the molecular shear rate $1/\tau_0$ of \cref{eq:dns-lambda}, then inserting the resulting $s_\nu$ into the multiply.
The persistent digital state remains the population field $F_t$, so $\Lambda_t=G(f_t)$.
The D2Q9 circuits of \cref{app:d2q9} instead keep a persistent shear-rate seed and one Picard step.
The integer $L=\mathrm{round}(2^{f_\lambda}\lambda_\nu)$ then enters
\cref{eq:fp-multiply}.
Since $\nu_t\geq0$ and $\tau_0>\Delta t/2$, one has $-1<\lambda_\nu<1$,
so every local multiplier remains admissible for the signed multiply
(including overrelaxation when $\lambda_\nu<0$).

The LES pipeline therefore applies a state-dependent sequence of digital
multiplies, with $\lambda_\nu(\vect{x},t)$ computed from the evolving
register state at every node and timestep, without intermediate
measurement or classical state reinjection.

\begin{figure}[!t]
 \centering
 \includegraphics[width=\linewidth]{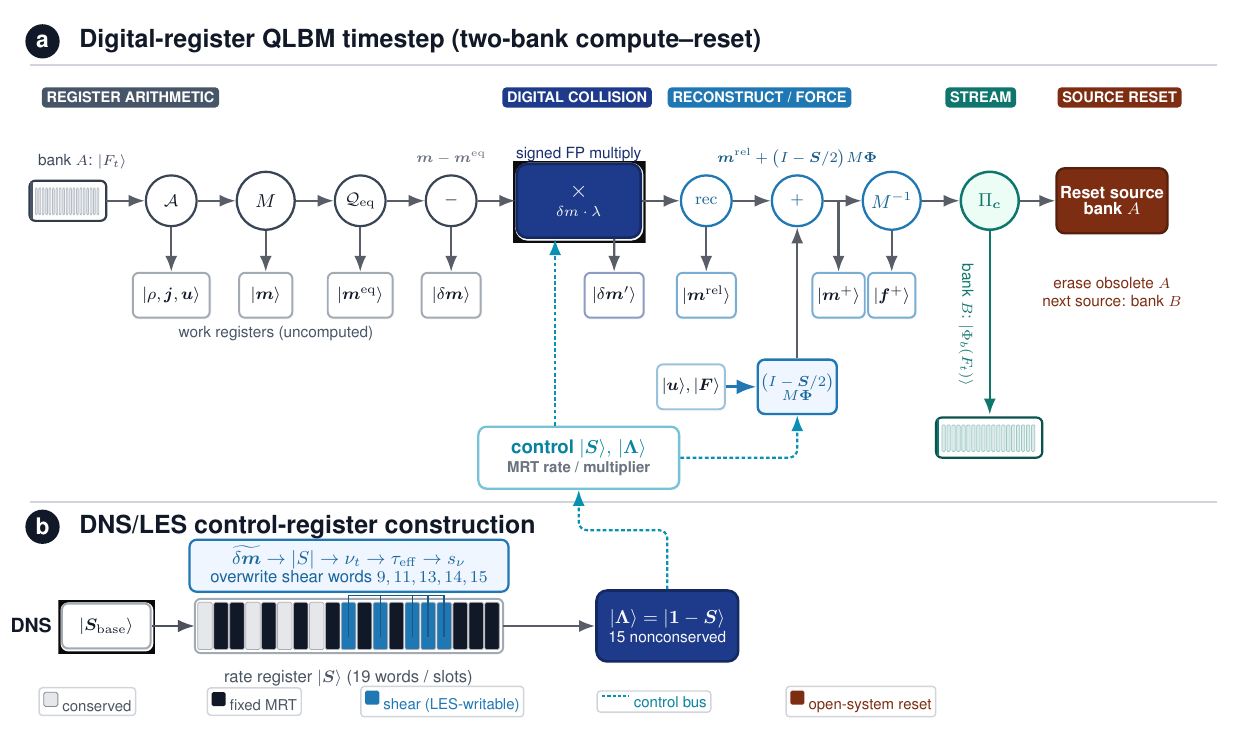}
 \caption{Digital-register forced D3Q19 QLBM timestep and DNS/LES control registers.
 (a)~Register pipeline from source bank $A=|F_t\rangle$ to destination bank $B=|F_{t+1}\rangle$.
 The stages are macroscopic recovery, moment transform, equilibrium arithmetic, nonequilibrium moments,
 signed fixed-point collision $\delta m\mapsto\lambda\odot\delta m$, reconstruction and body forcing
 $\Delta t(I-\mS/2)M\vect{\Phi}$, inverse transform, streaming into bank $B$,
 then $\mathrm{Reset}(A)$. Bank $B$ supplies the source state for the next timestep
 (cf.~\cref{tab:register-operations} and Algorithm~\ref{alg:digital-timestep}).
 Dashed cyan lines carry the MRT rate and multiplier registers $|\bm S\rangle$ and $|\bm\Lambda\rangle$.
 (b)~DNS loads fixed $|\bm S_{\mathrm{base}}\rangle$.
 LES overwrites shear words $9,11,13,14,15$ from local strain.
 Fifteen nonconserved multipliers $|\bm\Lambda\rangle=|\bm 1-\bm S\rangle$ feed the digital multiply in~(a).}
 \label{fig:qlbm-register-dns-les}
\end{figure}

\section{Emulator and numerical setup}
A direct statevector of the full lattice would grow exponentially with register count.
The production D3Q19 solver is the register implementation of
Algorithm~\ref{alg:digital-timestep}.
It evolves the computational-basis
population words through the same arithmetic stages as the logical
circuit.
Scheduled turbulence diagnostics are emulator-only.
The logical circuit measures at the final readout.

\subsection{Register-level and gate-level validation}
\label{sec:emulator-fidelity}
The two validation layers test the same compute--reset contract at different
resolutions.
The D3Q19 register backend is faithful at register boundaries.
Every stage of \cref{eq:phi-b} writes its result through the quantizer of
its register class, so stored values lie on the grid of \cref{eq:fp-grid}
with the declared integer and fractional widths.
Inside a stage the accumulation is carried in double precision and quantized
on output.
The MRT relaxation multiply of \cref{eq:fp-multiply} is treated differently.
It is evaluated in exact integer arithmetic on the scaled integers $D$ and
$L$, using a split product that remains inside $64$-bit integers even at
$b=40$, where the direct product would require $87$ bits.
That stage is treated exactly because a floor-based shift would bias every
negative product low by one unit in the last place, and overrelaxation makes negative
multipliers the normal case.
The long-horizon D3Q19 calculations exercise that finite-width map over $10^{6}$ steps.
The $8^3$ one-step test of \cref{tab:verification} measures the left-hand side of \cref{eq:fp-bound} against the parent floating-point MRT kernels at production width $b=24$.
Gate-level integer datapaths are demonstrated separately and bit-exactly on
D2Q9, which is a different stencil of the same architecture
(\cref{app:d2q9}).

\begin{table}[t]
\centering
\caption{Matched production configurations for the forced-HIT DNS and LES calculations
($dx=dt=1$, $Re_0=u_0N/\nu_0$).
$A$ is the body-force amplitude.
$\Delta_{\mathrm{diag}}$ is the diagnostic sampling interval in lattice steps.
QLBM-register production calculations use the digital fixed-point compute--reset backend.
DNS status uses $k_{\max}\eta\geq1$.}
\label{tab:numerical-setup}
\begin{tabular}{lcccccccc}
\toprule
Case & Grid & $\tau_0$ & $\nu_0$ & $u_0$ & $A$ & $C_s$ & Steps & $\Delta_{\mathrm{diag}}$ \\
\midrule
Classical MRT DNS & $320^3$ & 0.5064 & $2.133\times10^{-3}$ & 0.1 & $5\times10^{-6}$ & n/a & $10^{6}$ & $10^{3}$ \\
Digital QLBM DNS & $320^3$ & 0.5064 & $2.133\times10^{-3}$ & 0.1 & $5\times10^{-6}$ & n/a & $10^{6}$ & $10^{3}$ \\
Classical MRT LES & $64^3$ & 0.50128 & $4.267\times10^{-4}$ & 0.1 & $5\times10^{-6}$ & 0.13 & $10^{6}$ & $10^{3}$ \\
Digital QLBM LES & $64^3$ & 0.50128 & $4.267\times10^{-4}$ & 0.1 & $5\times10^{-6}$ & 0.13 & $8.78\times10^{5}$ & $10^{3}$ \\
\bottomrule
\end{tabular}
\end{table}

The production parameters are those of \cref{tab:numerical-setup}. Both the $320^3$ DNS candidate and the $64^3$ LES case share $Re_0=u_0N/\nu_0=15000$ with $\nu_0=(\tau_0-1/2)/3$, Taylor--Green initialization $u_0=0.1$, and forcing amplitude $5\times10^{-6}$.
DNS and classical LES run for $10^{6}$ lattice steps; digital LES for $8.78\times10^{5}$ (\cref{tab:numerical-setup}). All fields are reported in lattice units ($dx=dt=1$). Integration length and diagnostic sampling use lattice steps, while time histories are plotted in eddy turnovers $t/T_{\mathrm{eddy}}$ with $T_{\mathrm{eddy}}=u_{\mathrm{rms}}^{2}/\varepsilon_{\mathrm{tot}}$.
On DNS, $\varepsilon_{\mathrm{tot}}$ reduces to the molecular dissipation $\varepsilon_{\mathrm{mol}}$.
Spectra, structure functions, and increment flatness are recorded for
steps $\geq3\times10^{5}$. Time averages of those diagnostics and the
post-transient statistics in \cref{tab:post-transient,tab:dns-resolution}
use that common window.

Turbulence diagnostics are evaluated from three-dimensional velocity spectra on the periodic box of side $L_{\mathrm{box}}=N\Delta x$.
Shell wavenumbers use the standard Fourier spacing
\begin{equation}
 \Delta k=\frac{2\pi}{L_{\mathrm{box}}},
 \qquad
 k_n=n\Delta k,
 \qquad
 k_{\max}=\frac{\pi}{\Delta x},
 \label{eq:k-grid}
\end{equation}
with $n=0,1,\ldots,N/2$.
Plotted spectra use the radial shells $E(k)$.
Kinetic energy is the shell sum, while molecular dissipation uses the unbinned three-dimensional Fourier sum,
\begin{equation}
 \begin{aligned}
  K&=\sum_{k>0}E(k)\,\Delta k,
  &
  \varepsilon_{\mathrm{mol}}&=2\nu_0\sum_{\mathrm{modes}}\lvert\vect{k}\rvert^{2} e_{\mathrm{mode}},
  \\
  u_{\mathrm{rms}}&=\sqrt{\tfrac{2}{3}K},
  &
  \lambda_T&=\sqrt{15\nu_0\,u_{\mathrm{rms}}^{2}/\varepsilon_{\mathrm{mol}}},
  \\
  \mathrm{Re}_{\lambda}&=u_{\mathrm{rms}}\lambda_T/\nu_0,
  &
  \eta&=(\nu_0^{3}/\varepsilon_{\mathrm{mol}})^{1/4}.
 \end{aligned}
 \label{eq:hit-scales}
\end{equation}
The Kolmogorov and Taylor scales use this molecular spectral dissipation.
Compensated spectra are plotted as
$E(k)/(\varepsilon_{\mathrm{mol}}^{2/3}k^{-5/3})$
and compared with the Kolmogorov constant $C_K=1.5$.
The spectral integral length used in the plotted reference is
\begin{equation}
 L_{11}
 =\frac{3\pi}{4K}\sum_{k>0}\frac{E(k)}{k}\,\Delta k,
 \label{eq:pope-L}
\end{equation}
the integral scale entering the Pope shape, distinct from the box side
$L_{\mathrm{box}}$ and from Pope's large-eddy scale
$K^{3/2}/\varepsilon$.
As a reference shape we use a Pope-form spectrum with that measured
$L_{11}$ \cite{pope2000turbulent}
\begin{equation}
 \begin{aligned}
  E(k)&=C_K\varepsilon_{\mathrm{mol}}^{2/3}k^{-5/3}\,f_L(k L_{11})\,f_\eta(k\eta),
  \\
  f_L(k L_{11})&=\Biggl(\frac{k L_{11}}{\sqrt{(k L_{11})^{2}+c_L}}\Biggr)^{11/3},
  \\
  f_\eta(k\eta)
  &=\exp\!\Bigl(
    -\beta\bigl[((k\eta)^{4}+c_\eta^{4})^{1/4}-c_\eta\bigr]
  \Bigr),
 \end{aligned}
 \label{eq:pope}
\end{equation}
with $C_K=1.5$, $c_L=6.78$, $c_\eta=0.40$, and $\beta=5.2$ as in the
production diagnostics.
Pope comparisons are shown in Kolmogorov units
$E(k)/(\nu_0^{5}\varepsilon_{\mathrm{mol}})^{1/4}$ versus $k\eta$.
The model curve in \cref{fig:pope} is a classical DNS reference using those
DNS scales.
LES spectra appear on the same axes as a visual comparison.
Longitudinal increment flatness is
\begin{equation}
 F_L(r)
 =\frac{\langle[\delta_r u_\parallel]^{4}\rangle}
 {\langle[\delta_r u_\parallel]^{2}\rangle^{2}},
 \label{eq:flatness}
\end{equation}
averaged over the three Cartesian directions, with Gaussian reference
$F_L=3$.
Longitudinal structure functions
$S_p(r)=\langle|\delta_r u_\parallel|^{p}\rangle$ are analyzed in
extended self-similarity (ESS) form
\begin{equation}
 S_p(r)\sim\bigl[S_3(r)\bigr]^{\xi_p},
 \label{eq:ess}
\end{equation}
with scaling anomalies $\xi_p-p/3$ compared with the She--Leveque
\cite{she1994universal} and Benzi \cite{benzi1993ess} references.
Both fine-grid solvers satisfy $k_{\max}\eta\geq1$ (\cref{tab:dns-resolution}).
For LES energy-budget residuals we report
$\varepsilon_{\mathrm{tot}}=\varepsilon_{\nu}+\varepsilon_{\mathrm{SGS}}$ with
$\varepsilon_{\mathrm{SGS}}=\langle\nu_t|S|^{2}\rangle$
and $\varepsilon_{\nu}=2\nu_0\langle S_{\alpha\beta}S_{\alpha\beta}\rangle$.
On DNS, $\varepsilon_{\mathrm{SGS}}=0$ and $\varepsilon_{\mathrm{tot}}=\varepsilon_{\nu}$.
Kinetic energy $K$ is per unit mass, so the forcing power is
$P_F=\langle(\vect{F}/\rho)\cdot\vect{u}\rangle$.
Production uses $\rho_0=1$, and therefore $P_F=\langle\vect{F}\cdot\vect{u}\rangle$
in lattice units.
The residual is
$R_\varepsilon=\lvert\varepsilon_{\mathrm{tot}}/(-dK/dt+P_F)-1\rvert$.
\begin{table}[t]
\centering
\caption{Post-transient DNS resolution assessment ($dx=1$, $\nu_0=2.133\times10^{-3}$, $k_{\max}=\pi$).
Both fine-grid solvers satisfy $k_{\max}\eta\geq1$ and are retained as DNS.
Dissipation is the molecular spectral value $\varepsilon_{\mathrm{mol}}$ of \cref{eq:hit-scales}.
$\mathrm{Re}_\lambda=u_{\mathrm{rms}}\lambda_T/\nu_0$.}
\label{tab:dns-resolution}
\begin{tabular}{lcccccc}
\toprule
Case & $u_{\mathrm{rms}}$ & $\varepsilon_{\mathrm{mol}}$ & $\lambda_T$ & $\mathrm{Re}_\lambda$ & $\eta$ & $k_{\max}\eta$ \\
\midrule
Classical MRT DNS & $3.27\times10^{-2}$ & $1.64\times10^{-7}$ & $14.42$ & $221$ & $0.493$ & $1.55$ \\
Digital QLBM DNS & $3.20\times10^{-2}$ & $1.60\times10^{-7}$ & $14.34$ & $215$ & $0.497$ & $1.56$ \\
\bottomrule
\end{tabular}
\end{table}

\section{Results}
\subsection{Algorithmic verification}
\begin{table}[t]
\centering
\caption{Algorithmic verification.
One-step errors are relative $L^2$ population errors versus the parent floating-point MRT kernels
on an $8^3$ lattice using the digital fixed-point backend (nominal fractional width $b=24$).
Long-time pairwise diagnostics use the matched production calculations of \cref{tab:numerical-setup}.
The TKE-history $L^2$ difference measures trajectory decorrelation.
Statistical fidelity is judged from spectra, budgets, and \cref{tab:post-transient}.
Mass drift is $\max_t |m(t)/m(0)-1|$ for the QLBM-register calculations.
Classical float64 drift remains $\mathcal{O}(10^{-11})$.}
\label{tab:verification}
\begin{tabular}{lcc}
\toprule
Quantity & DNS & LES \\
\midrule
One-step population error (vs float parent) & $<5\times10^{-4}$ & $<5\times10^{-3}$ \\
Time-averaged spectrum relative difference & $5.45\times10^{-2}$ & $3.05\times10^{-2}$ \\
TKE-history pairwise $L^2$ difference & $2.35\times10^{-1}$ & $2.11\times10^{-1}$ \\
Max.\ relative mass drift (QLBM) & $6.72\times10^{-4}$ & $1.08\times10^{-4}$ \\
Multiplier admissibility & $|\lambda_r|\le1$ (fixed) & $|\lambda_\nu|<1$, $\nu_t\ge0$, $\tau_{\mathrm{eff}}>1/2$ \\
\bottomrule
\end{tabular}
\end{table}

\begin{table}[t]
\centering
\caption{Post-transient statistics from the production calculations
($t\geq 3\times10^{5}$).
DNS and classical LES run to $10^{6}$ steps; digital LES to $8.78\times10^{5}$.
$\langle K\rangle$ and $\langle\varepsilon_{\mathrm{mol}}\rangle$ are means of the
kinetic-energy and molecular spectral-dissipation histories.
$\sigma_K$ and $\sigma_{\varepsilon}$ are the corresponding sample
standard deviations.
$\langle R_\varepsilon\rangle$ is the mean residual over the same window.
LES uses $\varepsilon=\varepsilon_\nu+\varepsilon_{\mathrm{SGS}}$ in that residual.
Mass drift is $\max_t|m(t)/m(0)-1|$ over the full integration.
Classical values use $\rho$-mean drift at constant volume.}
\label{tab:post-transient}
\resizebox{\linewidth}{!}{%
\begin{tabular}{@{}lcccccc@{}}
\toprule
Case & $\langle K\rangle$ & $\sigma_K$ & $\langle\varepsilon_{\mathrm{mol}}\rangle$ & $\sigma_{\varepsilon}$ & $\langle R_\varepsilon\rangle$ & Mass drift \\
\midrule
Classical MRT DNS & $1.602\times10^{-3}$ & $2.571\times10^{-4}$ & $1.652\times10^{-7}$ & $3.224\times10^{-8}$ & $1.041\times10^{-1}$ & $8.67\times10^{-11}$ \\
Digital QLBM DNS & $1.538\times10^{-3}$ & $2.455\times10^{-4}$ & $1.597\times10^{-7}$ & $2.917\times10^{-8}$ & $1.022\times10^{-1}$ & $6.72\times10^{-4}$ \\
Classical MRT LES & $2.908\times10^{-4}$ & $5.064\times10^{-5}$ & $4.544\times10^{-8}$ & $6.418\times10^{-9}$ & $6.295\times10^{-2}$ & $8.95\times10^{-11}$ \\
Digital QLBM LES & $2.853\times10^{-4}$ & $4.295\times10^{-5}$ & $4.496\times10^{-8}$ & $5.535\times10^{-9}$ & $3.504\times10^{-2}$ & $1.08\times10^{-4}$ \\
\bottomrule
\end{tabular}%
}
\end{table}

One complete collide--force--stream step on an $8^3$ lattice agrees with
the parent floating-point MRT kernels to within fixed-point truncation
(\cref{tab:verification}).
The LES test matches the strain magnitude, eddy viscosity, and the
state-dependent shear multiplier to the same tolerance.

Turbulent dynamics are sensitive to small perturbations
\cite{pope2000turbulent}.
Finite-width perturbations therefore cause the classical and register
trajectories to decorrelate pointwise over long integrations, while the
corresponding exact-real timestep maps still agree
(\cref{eq:commuting}).
Long-horizon agreement is statistical: mass, energy budgets, spectra,
dissipation, intermittency, and the post-transient means in
\cref{tab:post-transient}.
Final population fields show the same pointwise mix
(\cref{fig:population-comparison}).

\subsection{Kinetic energy and dissipation}
The history figures use the four matched production calculations of
\cref{tab:numerical-setup}
(classical MRT and digital QLBM, DNS $320^3$ and LES $64^3$).
Time is lattice time in eddy turnovers
$T_{\mathrm{eddy}}=u_{\mathrm{rms}}^{2}/\varepsilon_{\mathrm{tot}}$,
one value per case.
DNS and LES therefore span different
$t/T_{\mathrm{eddy}}$ ranges at the production step counts of \cref{tab:numerical-setup}.
$K$ is the shell sum of \cref{eq:hit-scales}.
Spectral dissipation in this subsection is the molecular $\varepsilon_{\mathrm{mol}}$ of that equation.
\begin{figure}[H]
 \centering
 \includegraphics[width=0.78\linewidth]{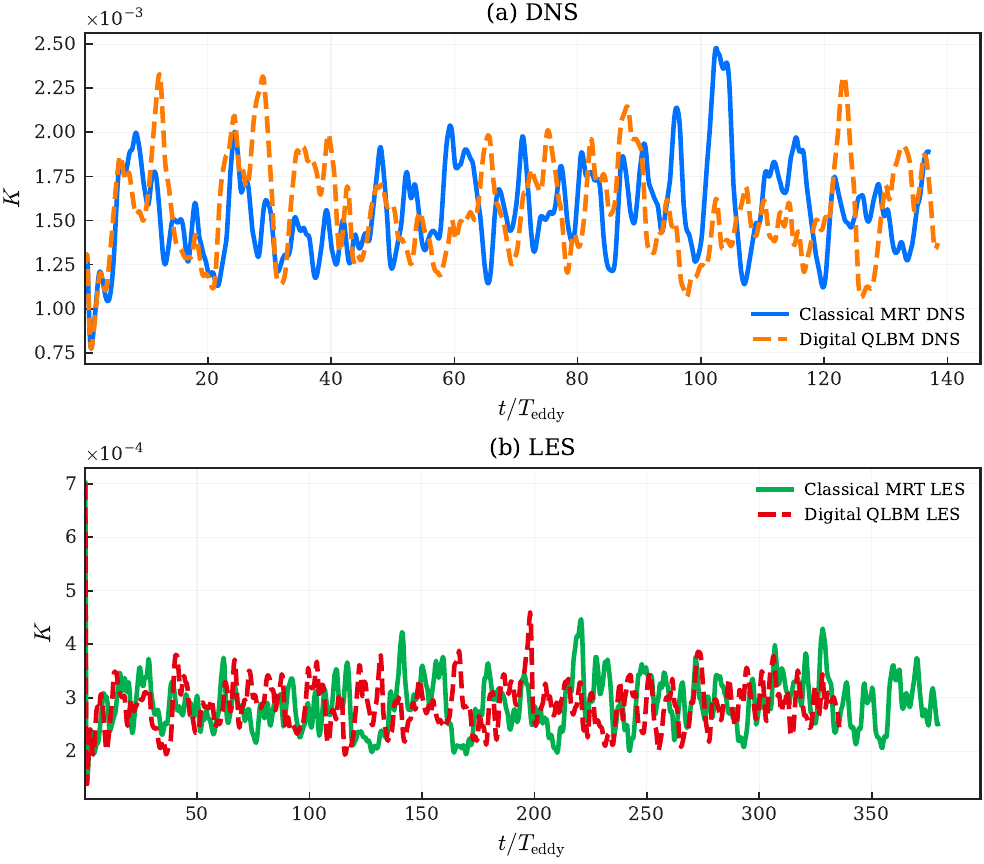}
 \caption{Volume-averaged turbulent kinetic energy $K$.
 (a)~DNS.
 (b)~LES.}
 \label{fig:tke}
\end{figure}

Both pairs occupy the same statistically stationary energy band
(\cref{fig:tke}). Post-transient mean kinetic energies differ by
approximately $4.0\%$ in DNS and $1.9\%$ in LES, with comparable
fluctuation levels (\cref{tab:post-transient}).

\begin{figure}[H]
 \centering
 \includegraphics[width=0.78\linewidth]{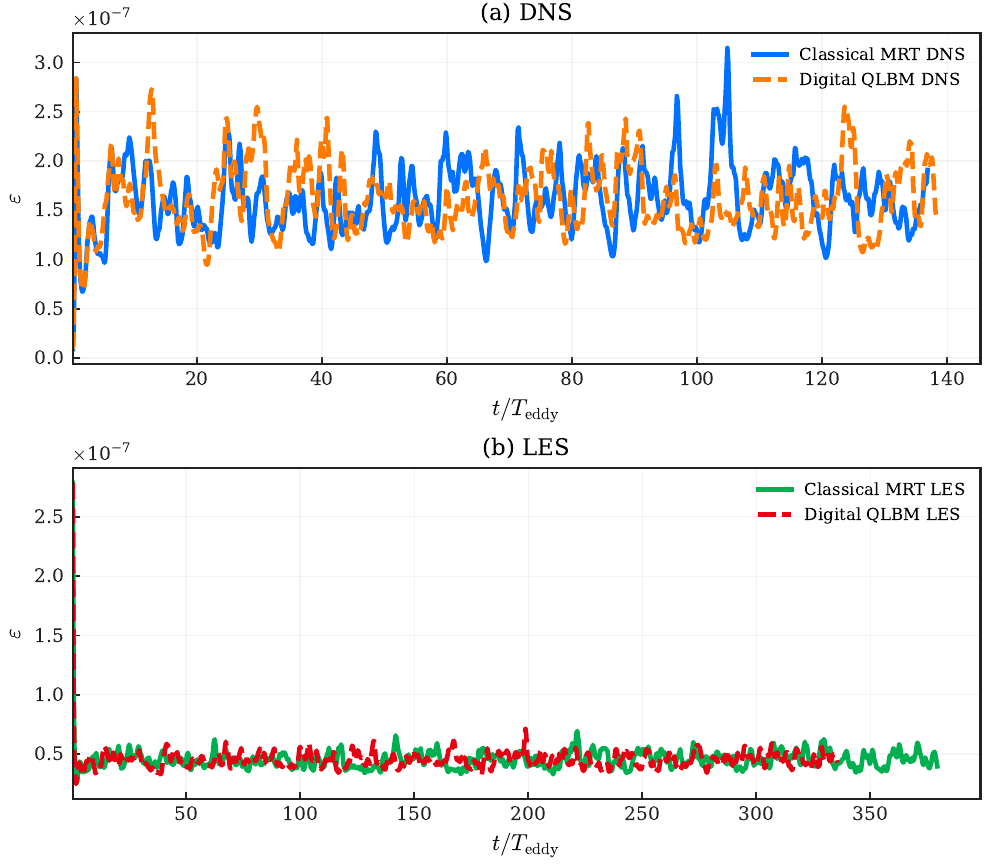}
 \caption{Spectral dissipation $\varepsilon_{\mathrm{mol}}$.
 (a)~DNS.
 (b)~LES (molecular $\nu_0$ only, with $\varepsilon_{\mathrm{SGS}}$ entering
 \cref{fig:energy-balance}).}
 \label{fig:dissipation}
\end{figure}

Post-transient means of this spectral $\varepsilon_{\mathrm{mol}}$ differ by
approximately $3.4\%$ (DNS) and $1.1\%$ (LES), with comparable
sample standard deviations (\cref{fig:dissipation,tab:post-transient}).
The lower LES level is the coarser $64^3$ filtered field.

\begin{figure}[H]
 \centering
 \includegraphics[width=0.78\linewidth]{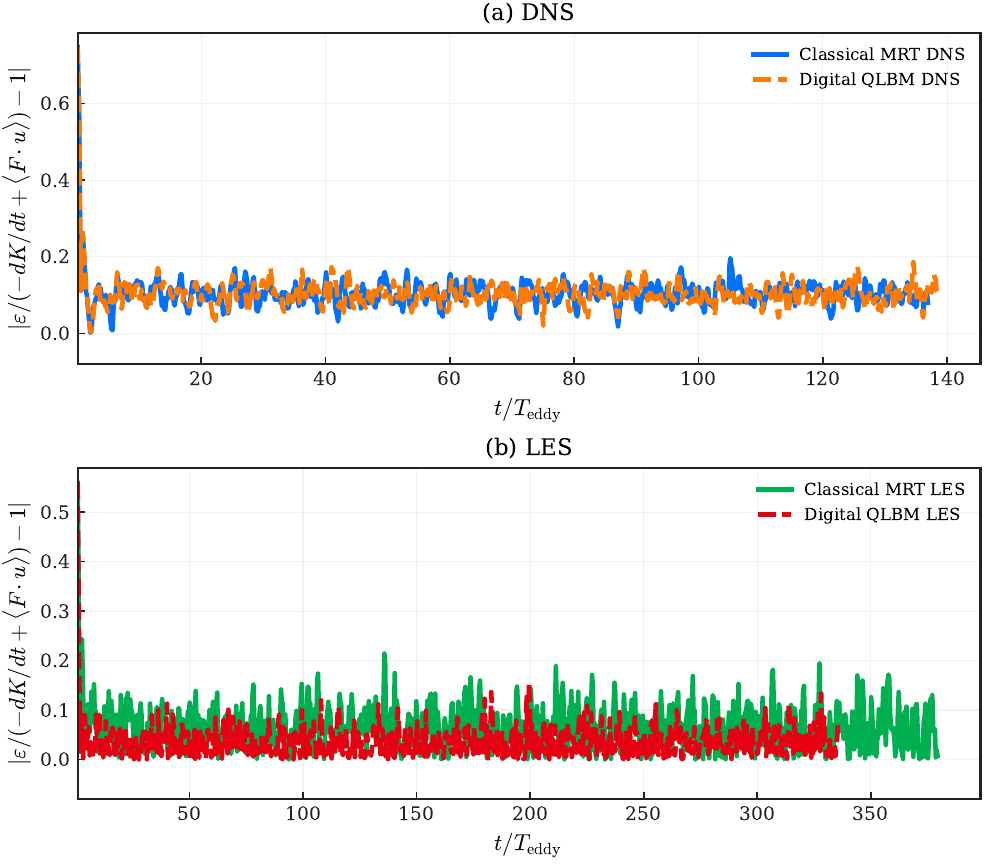}
 \caption{Energy-budget residual
 $R_\varepsilon=
 |\varepsilon_{\mathrm{tot}}/(-dK/dt+P_F)-1|$,
 with $P_F=\langle(\vect{F}/\rho)\cdot\vect{u}\rangle$.
 Strain-based $\varepsilon_\nu=2\nu_0\langle S_{\alpha\beta}S_{\alpha\beta}\rangle$
 differs from the spectral $\varepsilon_{\mathrm{mol}}$ of \cref{fig:dissipation}.
 (a)~DNS ($\varepsilon_{\mathrm{tot}}=\varepsilon_\nu$).
 (b)~LES ($\varepsilon_{\mathrm{tot}}=\varepsilon_\nu+\varepsilon_{\mathrm{SGS}}$).}
 \label{fig:energy-balance}
\end{figure}

Post-transient mean residuals are
$\langle R_\varepsilon\rangle\simeq0.104$ (classical) and $0.102$
(register) for DNS, and $0.0630$ and $0.0350$ for LES
(\cref{fig:energy-balance,tab:post-transient}).
The floor is a per-sample discrete-sampling residual of the forced HIT
budget.
Mean LES imbalances are $3.8\times10^{-9}$ (classical) and
$-1.1\times10^{-9}$ (register), against forcing powers of
$7.1\times10^{-8}$ and $7.0\times10^{-8}$.
The LES energy budgets give
$\langle\varepsilon_{\nu}\rangle=3.50\times10^{-8}$ and
$\langle\varepsilon_{\mathrm{SGS}}\rangle=3.94\times10^{-8}$ (classical)
versus $3.46\times10^{-8}$ and $3.38\times10^{-8}$ (register), hence
$\langle\varepsilon_{\mathrm{tot}}\rangle=7.44\times10^{-8}$ versus
$6.84\times10^{-8}$.
The register LES carries less subgrid dissipation and a comparable
resolved viscous component.
The dissipation column of \cref{tab:post-transient} is the molecular
spectral mean $\langle\varepsilon_{\mathrm{mol}}\rangle$.

\begin{figure}[H]
 \centering
 \includegraphics[width=0.49\linewidth]{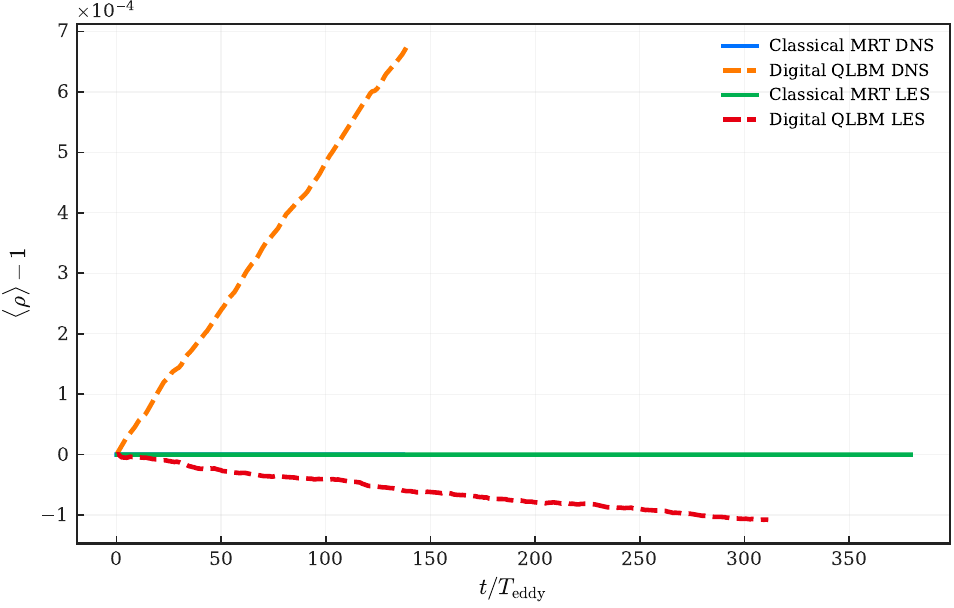}
 \includegraphics[width=0.49\linewidth]{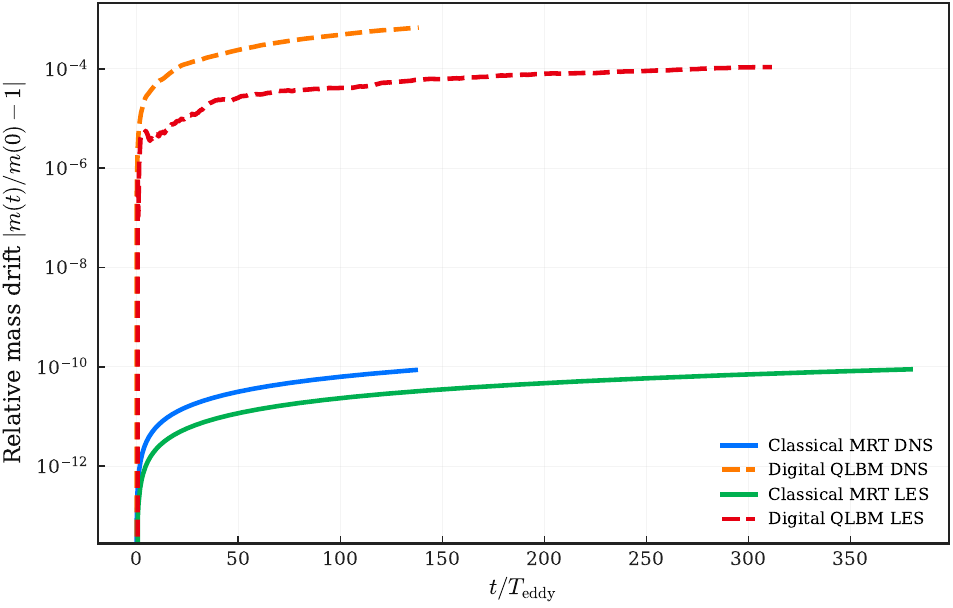}
 \caption{Mass conservation.
 Left, signed density deviation $\langle\rho\rangle-1$.
 Right, relative mass drift $|m(t)/m(0)-1|$ (log scale).
 Classical float64 mass drift remains $\sim10^{-10}$.
 Digital QLBM at production width $b=24$ reaches $6.72\times10^{-4}$ (DNS) and
 $1.08\times10^{-4}$ (LES).}
 \label{fig:mass}
\end{figure}

Maximum relative mass drifts are
$8.67\times10^{-11}$ (DNS) and $8.95\times10^{-11}$ (LES) for the
classical float64 solvers, versus $6.72\times10^{-4}$ (DNS) and
$1.08\times10^{-4}$ (LES) for the digital fixed-point QLBM-register
calculations (\cref{fig:mass,tab:post-transient}).
The extra register drift is finite-bit-width rounding of $\Phi_b$ at
$b=24$.

\subsection{QLBM-DNS}
\begin{figure}[H]
 \centering
 \includegraphics[width=0.96\linewidth]{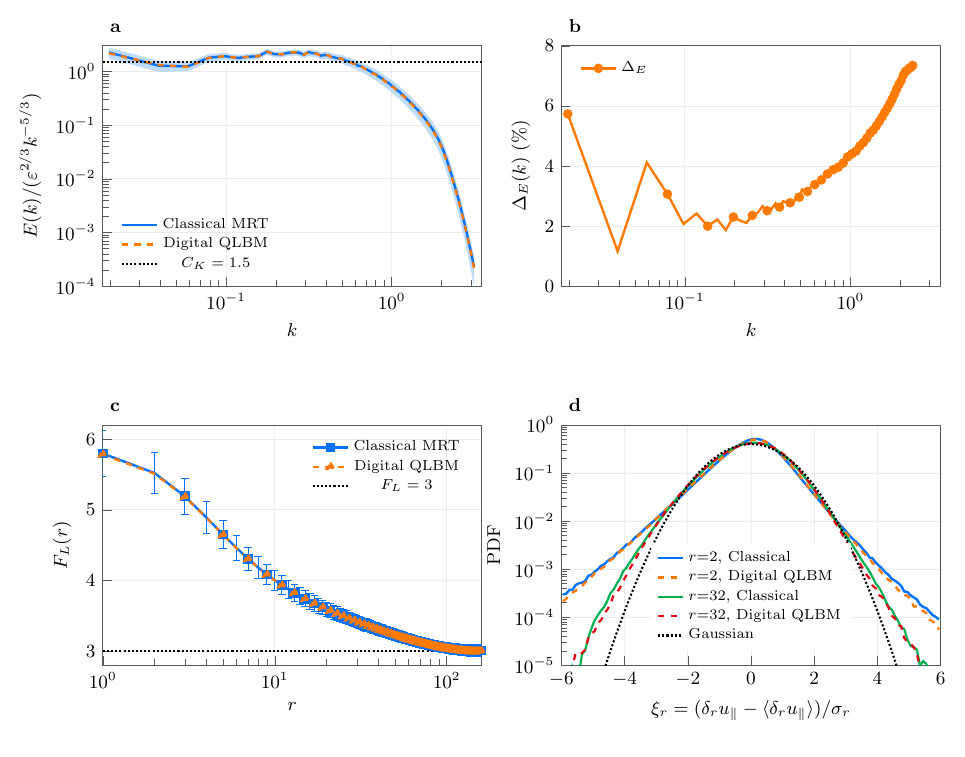}
 \caption{Multiscale DNS fidelity of classical MRT and digital QLBM-register
 solvers on $320^3$ (production fields).
 a~Time-averaged Kolmogorov-compensated spectra
 $E(k)/(\varepsilon_{\mathrm{mol}}^{2/3}k^{-5/3})$ with sample standard-deviation band
 and $C_K=1.5$ reference.
 b~Scale-resolved spectral discrepancy
 $\Delta_E(k)=100\,|\overline E_{\mathrm{reg}}-\overline E_{\mathrm{cl}}|
 /\overline E_{\mathrm{avg}}$, where
 $\overline E_{\mathrm{avg}}=(\overline E_{\mathrm{reg}}+\overline E_{\mathrm{cl}})/2$.
 Bins with
 $\overline E_{\mathrm{avg}}/\max_k\overline E_{\mathrm{avg}}\leq10^{-6}$
 are omitted where the spectral energy is negligibly small.
 c~Longitudinal increment flatness $F_L(r)$ (\cref{eq:flatness}).
 Error bars show sample standard deviation. Dotted line, Gaussian $F_L=3$.
 d~Probability densities of the signed increments
 $\xi_r=(\delta_r u_{\parallel}-\langle\delta_r u_{\parallel}\rangle)/\sigma_r$
 at $r/\Delta x=2$ ($r/\eta\simeq4$) and $r/\Delta x=32$ ($r/\eta\simeq65$).
 Both solvers use the same post-transient sampling times, histogram edges, and
 Cartesian average, with spatial stride~$2$.
 Each distribution is normalized independently to zero mean and unit variance.
 Histogram bins are retained, including low counts.}
 \label{fig:dns-multiscale}
\end{figure}

Classical and register DNS spectra overlap across the resolved band
(time-averaged relative difference $5.45\times10^{-2}$,
\cref{tab:verification},
\cref{fig:dns-multiscale}\,a,b).
A short near-inertial interval sits near the $C_K=1.5$ reference before
the viscous roll-off.
At $\mathrm{Re}_\lambda\simeq220$ that interval is short.
Flatness and increment PDFs agree within the observed sampling
variability (\cref{fig:dns-multiscale}\,c,d).
$F_L(r)$ is elevated at small $r$ (intermittency) and approaches the
Gaussian limit at large separations.
ESS scaling for the same DNS pair is shown with the LES cases in
\cref{fig:ess}.

\subsection{QLBM-LES}
\begin{figure}[H]
 \centering
 \includegraphics[width=0.78\linewidth]{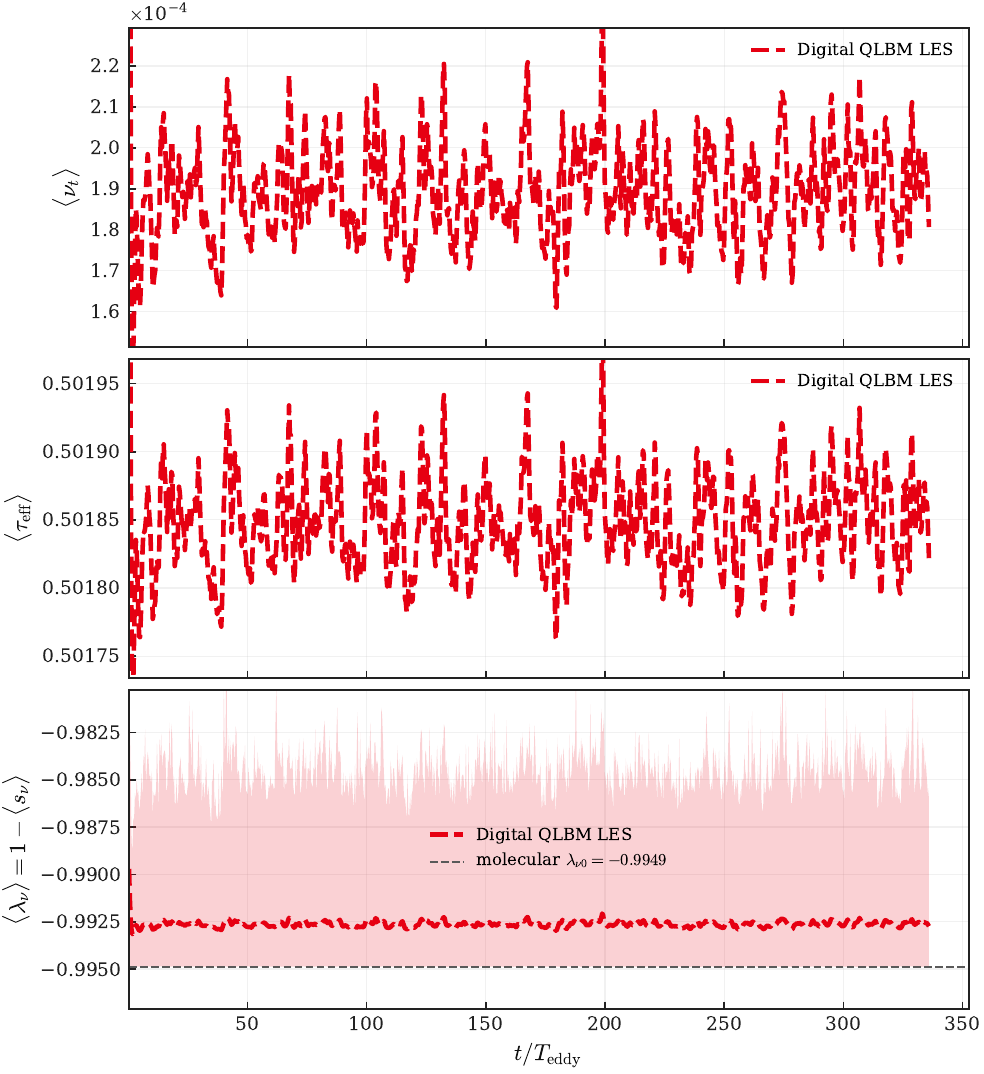}
 \caption{Digital QLBM-register LES state-dependent Smagorinsky control path on
 $64^3$.
 Top, domain-mean eddy viscosity $\langle\nu_t\rangle$.
 Middle, effective shear relaxation time $\langle\tau_{\mathrm{eff}}\rangle$.
 Bottom, shear multiplier $\langle\lambda_\nu\rangle=1-\langle s_\nu\rangle$,
 zoomed about the molecular value $\lambda_{\nu0}=1-1/\tau_0$ (dashed).
 The shaded band is the instantaneous spatial min--max of $\lambda_\nu$.
 Because $\tau_0\simeq1/2$, $\lambda_\nu$ remains near $-1$ and inside
 $(-1,1)$, with overrelaxation through negative $L$.}
 \label{fig:les-multiplier}
\end{figure}

The classical LES uses the same discrete Smagorinsky rule.
\Cref{fig:les-multiplier} shows the register control path is active, with
$\lambda_\nu$ inside $(-1,1)$.

\begin{figure}[H]
 \centering
 \includegraphics[width=0.72\linewidth]{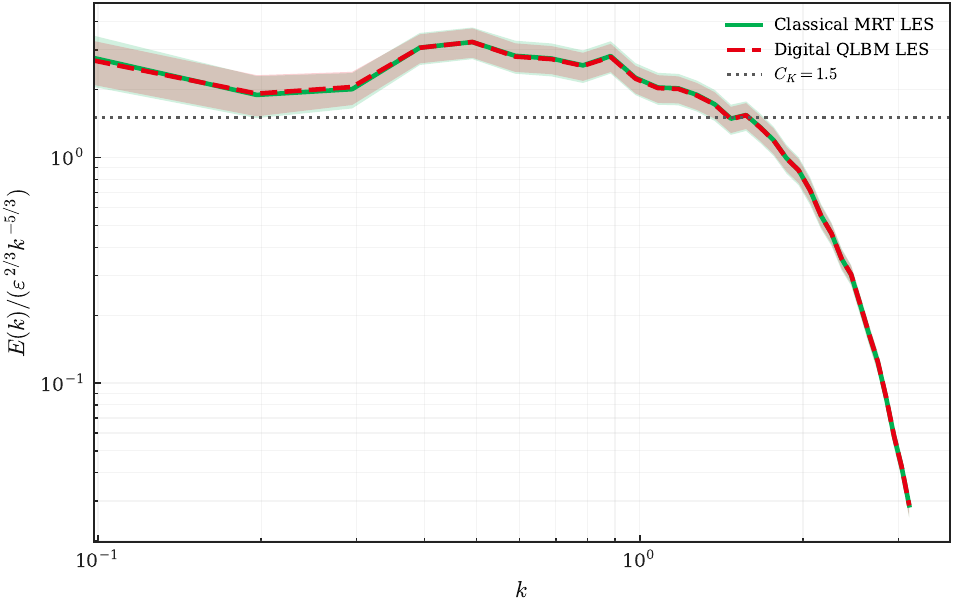}
 \caption{Time-averaged compensated spectra for the matched $64^3$
 Smagorinsky LES calculations (classical MRT and digital QLBM-register),
 using the same compensation as \cref{fig:dns-multiscale}\,a.
 Shaded band, sample standard deviation of $E(k)$.
 Dotted line, $C_K=1.5$.}
 \label{fig:spectra-les}
\end{figure}

Classical and register LES spectra overlap across the resolved band
(time-averaged relative difference $3.05\times10^{-2}$,
\cref{tab:verification,fig:spectra-les}).
Relative to the DNS compensated spectra, the LES curves sit somewhat
higher and are less smooth.
On the coarse $64^3$ Smagorinsky grid the near-inertial interval is short.
The $C_K=1.5$ line is a visual reference.
Classical and register DNS and LES spectra are further compared with the
Pope model (\cref{eq:pope}) in \cref{fig:pope}.

\begin{figure}[H]
 \centering
 \includegraphics[width=0.72\linewidth]{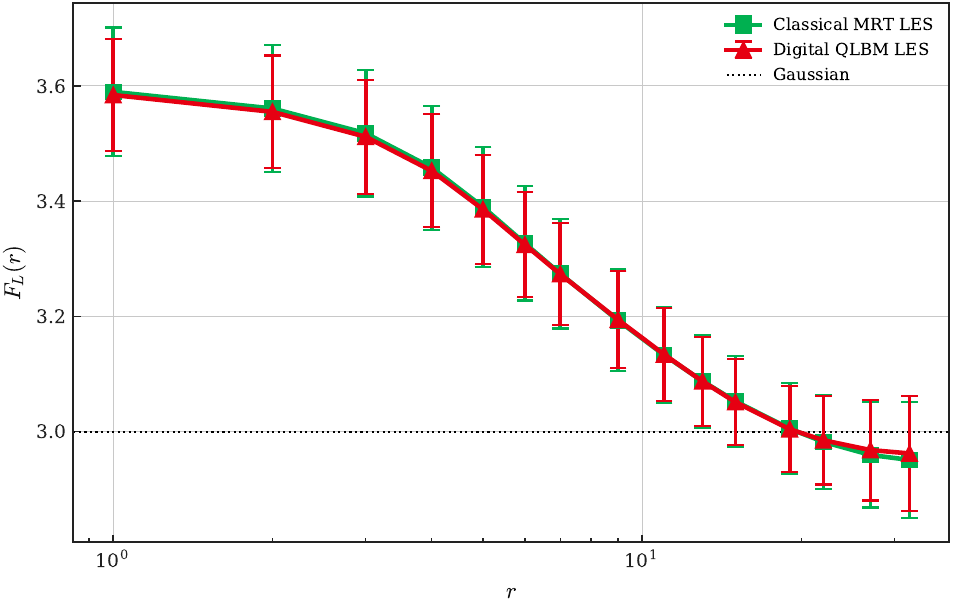}
 \caption{LES longitudinal increment flatness $F_L(r)$
 (\cref{eq:flatness}) for classical MRT and digital QLBM-register solvers on
 $64^3$.
 Error bars show sample standard deviation across snapshots. Dotted line,
 Gaussian $F_L=3$.}
 \label{fig:flatness-les}
\end{figure}

Classical and register curves coincide
(\cref{fig:flatness-les}).
Relative to DNS (\cref{fig:dns-multiscale}\,c), the small-scale elevation is
weaker ($F_L(1)\simeq3.6$ versus $\simeq5.8$) and the Gaussian limit is
reached at smaller separations, as expected for filtered Smagorinsky
LES.
In ESS (\cref{fig:ess}), classical and register $S_p$--$S_3$ curves
overlap within each regime.
Low-order exponents ($p=1,2,3$) stay close between DNS and LES.
At $p=4$--$6$ the LES anomalies are weaker
($\xi_6-2\simeq-0.11$ versus $\simeq-0.24$ in DNS),
again as expected for the filtered field.

\begin{figure}[H]
 \centering
 \includegraphics[width=0.88\linewidth]{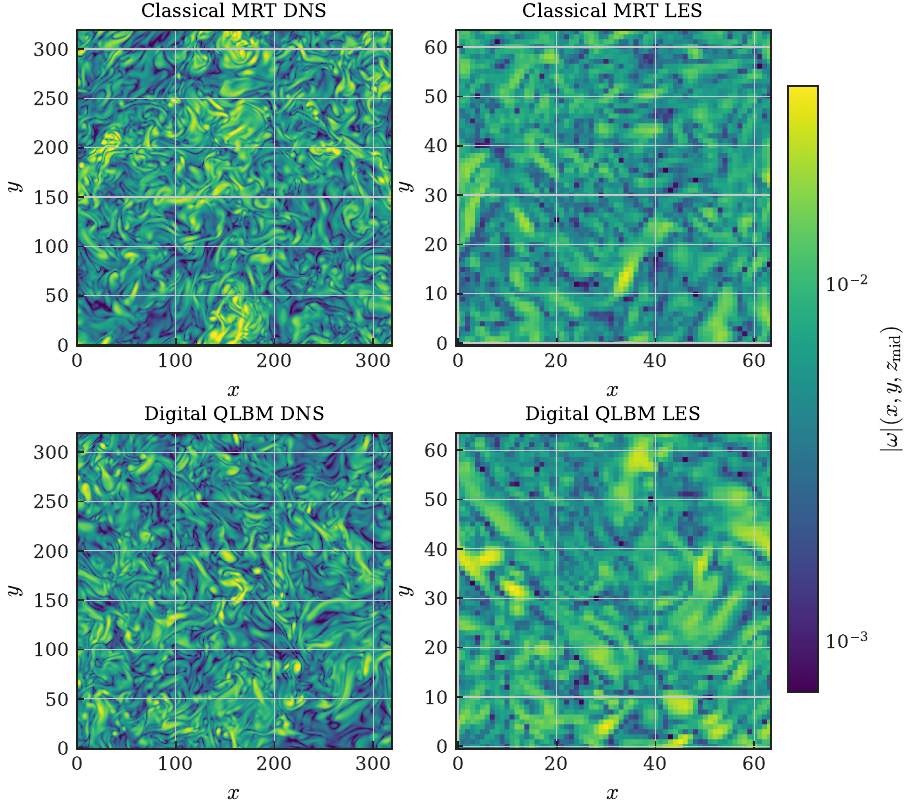}
 \caption{Final-state midplane vorticity magnitude
 $|\omega|(x,y,z_{\mathrm{mid}})$, with a shared logarithmic colour scale.
 Left column, $320^3$ DNS after $10^{6}$ lattice steps. Right column, $64^3$
 Smagorinsky LES (classical after $10^{6}$ steps; digital after $8.78\times10^{5}$).
 Top row, classical MRT. Bottom row, digital QLBM-register.}
 \label{fig:vorticity-midplane}
\end{figure}

The DNS panels of \cref{fig:vorticity-midplane} resolve fine vortical filaments, whereas the LES panels
show the coarser filtered field on the $64^3$ grid.
Within each regime the classical and register snapshots show similar vortical structure.

Every local shear multiplier stays inside $(-1,1)$
(\cref{fig:les-multiplier}).
Residual differences follow from the shared discrete Smagorinsky closure
at finite width.

\subsection{Fixed-point ablation}
\label{sec:fixedpoint-results}
Production turbulence uses the digital fixed-point backend of
Algorithm~\ref{alg:digital-timestep} at fractional width $b=24$.
Forced $32^3$ DNS at $b\in\{16,24,32,40\}$ for $2\times10^{5}$ steps is
compared with a matched floating-point reference
(\cref{app:fixedpoint,tab:fixedpoint-robustness,fig:fixedpoint-robustness}).
Mass drift at $b=24$ is $1.36\times10^{-3}$, falling to $1.87\times10^{-6}$
at $b=32$ and $7.26\times10^{-9}$ at $b=40$
(\cref{tab:fixedpoint-robustness,fig:fixedpoint-robustness}).
At $b=16$ the flow collapses, with final $K$ about three orders below the
reference.
Widths $b=24$, $32$, and $40$ remain in the sustained kinetic-energy band.
Pre-chaotic kinetic-energy error is $9.24\times10^{-3}$ at $b=24$,
$1.45\times10^{-5}$ at $b=32$, and $6.95\times10^{-8}$ at $b=40$.
Over the full horizon, pairwise field $L^2$ for those three widths sits
near $2\times10^{-2}$ and is mixed by chaos.
Production HIT uses $b=24$ on the $320^3$/$64^3$ grids of
\cref{tab:numerical-setup}.
The $32^3$ survey is only the reduced-grid check of that bit width
(\cref{tab:fixedpoint-robustness}).

\section{Circuit-uplift requirements and resource estimates}
The register intermediate representation defines a circuit-level replacement path for every numerical kernel under Algorithm~\ref{alg:digital-timestep}.
Fixed sparse transforms $M$ and $M^{-1}$ compile to reversible signed adders and constant multipliers
\cite{vedral1996arithmetic,haner2018optimizing}.
Density, momentum, equilibrium, forcing, and LES-control quantities use reversible fixed-point arithmetic.
Reciprocal and square-root functions use reversible Newton iteration on the fixed-point grid
\cite{haner2018optimizing}, with each square-root step calling the reciprocal.
Periodic streaming is a fixed reversible permutation of population-word locations.
MRT relaxation is a reversible signed multiply.
Open-system irreversibility is unconditional $\mathrm{Reset}$ of bank $A$
(\cref{prop:cptp-reset}).
\Cref{fig:qlbm-register-lifetime} shows the logical register lifetime of one forced DNS timestep.
The circuit evaluates \(U_{\Phi}\) into bank \(B\), uncomputes the arithmetic workspace, and applies \(\mathrm{Reset}(A)\).

\begin{sidewaysfigure}[p]
\centering
\includegraphics[width=0.88\textheight,height=0.90\textwidth,keepaspectratio]{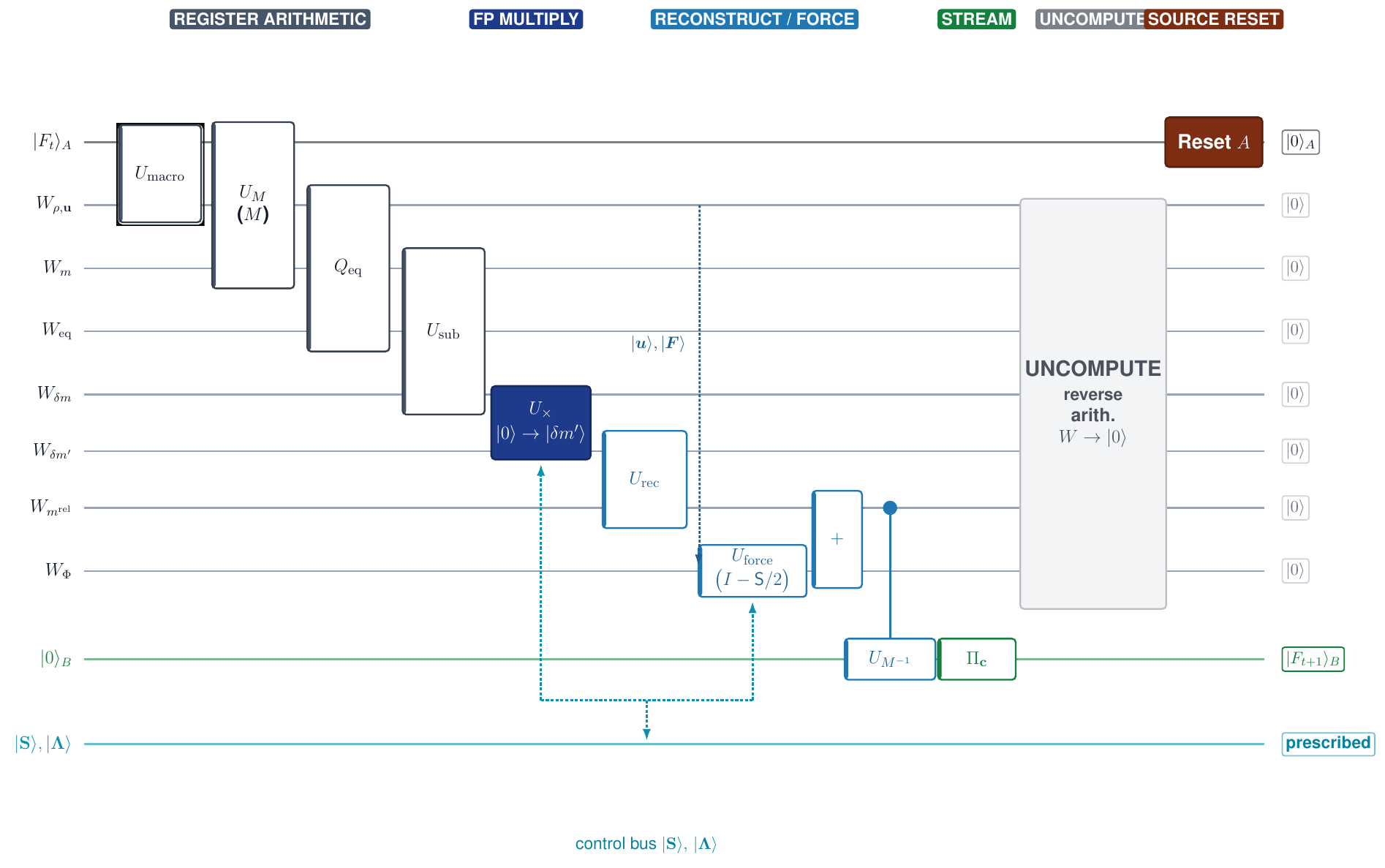}
\caption{Logical register lifetime of one forced D3Q19 DNS timestep
(companion to \cref{fig:qlbm-register-dns-les}).
Horizontal lines are register families.
The signed multiply $U_{\times}$ writes $\delta m'=\bm\Lambda\odot\delta m$ into a clean word $W_{\delta m'}$.
$U_{\mathrm{force}}$ is the discrete body-force source.
$U_{\Phi}$ in \cref{eq:u-step} is the full two-bank map.
MRT rates $\ket{\bm S},\ket{\bm\Lambda}$ are prescribed.
Streaming writes bank $B$.
Uncomputation returns the workspace to $\ket{0}$, after which $\mathrm{Reset}(A)$ discards the obsolete source.
The same arithmetic primitives are compiled at gate level on D2Q9 (\cref{app:d2q9}).}
\label{fig:qlbm-register-lifetime}
\end{sidewaysfigure}

Each workspace line is a register family, restored to $\ket{0}$ after streaming into $B$ and therefore reusable on the next site.
DNS uses prescribed MRT rates.
The LES construction of local multipliers from $F_t$ is given in \cref{fig:qlbm-register-dns-les}\,b.
Persistent population storage is one bank
\begin{equation}
 W_{\mathrm{pop}}=Q N^{d}\,b_{\mathrm{pop}},
 \label{eq:pop-width}
\end{equation}
and two banks in the architecture, $W_{\mathrm{state}}=2W_{\mathrm{pop}}$.
Temporary arithmetic workspace is reusable under serial site scheduling, with a schedule-dependent peak width of order $10^{3}$--$10^{4}$ logical qubits.
Under that schedule the full-lattice Toffoli count and serial depth per
timestep scale as
$G_{\mathrm{step}}=N^{d}G_{\mathrm{site}}$ and
$D_{\mathrm{serial}}=N^{d}D_{\mathrm{site}}$.
For the production D3Q19 grids, $Q=19$ and $d=3$.
With the analytic per-site band $G_{\mathrm{site}}=10^{4}$--$10^{5}$, $N=320$ is of order $10^{11}$--$10^{12}$ Toffoli-equivalent operations per timestep before the $10^{6}$-step horizon.
\Cref{tab:resource-estimate} reports the corresponding bank widths, reset counts, and analytic leading-order arithmetic resources at $b_{\mathrm{pop}}=1{+}2{+}24=27$, using the production Newton counts $J_{\mathrm{r}}=J_{\mathrm{s}}=6$.
The $T$-count uses seven $T$ gates per Toffoli-equivalent, so the per-site band is $7\times10^{4}$--$7\times10^{5}$.
The logical circuit measures at the final readout.
A hardware reset may compile as measure-and-correct.
Two-bank storage and serial full-lattice depth limit feasibility at production scale.
Gate-level confirmation of the same compute--reset contract on D2Q9 is given in \cref{app:d2q9}.

\begin{table}[!htbp]
\centering
\caption{Logical-resource estimate for one forced D3Q19 digital-register timestep
under the two-bank compute--reset architecture of \cref{eq:boxed-algo,prop:cptp-reset}.
Persistent bank widths and reset counts follow from $W_{\mathrm{pop}}=Q N^{d}b_{\mathrm{pop}}$
with $Q=19$, $d=3$, and the production signed population word $b_{\mathrm{pop}}=1{+}2{+}24=27$;
the second population bank has equal width, so $W_{\mathrm{state}}=2W_{\mathrm{pop}}$.
Temporary workspace, Toffoli-equivalent counts, $T$-counts, and depths are analytic
leading-order D3Q19 estimates from reversible fixed-point arithmetic and the register-level IR.
Compiled D2Q9 gate-level counts appear in \cref{app:d2q9}.
Work-register qubits are reusable across sites in a serial schedule.
Dissipation is unconditional reset of the obsolete bank, with algorithmic measurement at the final readout.
Full-lattice serial cost scales as $N^{d}$ times the per-site count.}
\label{tab:resource-estimate}
\footnotesize
\setlength{\tabcolsep}{5pt}
\begin{tabular}{@{}>{\raggedright\arraybackslash}p{0.40\textwidth}cccc@{}}
\toprule
Resource & $N=8$ & $N=32$ & $N=64$ & $N=320$ \\
 & one-step & width survey & LES & DNS \\
\midrule
Persistent pop.\ qubits $W_{\mathrm{pop}}$ (bank $A$)
  & $2.6\times10^{5}$ & $1.7\times10^{7}$ & $1.3\times10^{8}$ & $1.7\times10^{10}$ \\
Second population bank $B$
  & \multicolumn{4}{c}{$W_{\mathrm{pop}}$} \\
Persistent state $W_{\mathrm{state}}=2W_{\mathrm{pop}}$
  & $5.3\times10^{5}$ & $3.4\times10^{7}$ & $2.7\times10^{8}$ & $3.4\times10^{10}$ \\
Temp.\ arithmetic workspace (serial)
  & \multicolumn{4}{c}{$O(10^{3}$--$10^{4})$} \\
Integer / fractional bits, excl.\ sign (pop.\ / moment)
  & \multicolumn{4}{c}{$2{+}24$ / $6{+}24$} \\
Signed word widths (pop.\ / moment)
  & \multicolumn{4}{c}{$27$ / $31$} \\
Adder-tree / multiply ops (order, per site)
  & \multicolumn{4}{c}{$O(10^{2})$} \\
Newton reciprocal / sqrt ($J_{\mathrm{r}}=J_{\mathrm{s}}=6$)
  & \multicolumn{4}{c}{$O(J_{\mathrm{r}}w^{2})$ / $O(J_{\mathrm{s}}J_{\mathrm{r}}w^{2})$} \\
Toffoli-eq.\ / site / step (analytic)
  & \multicolumn{4}{c}{$10^{4}$--$10^{5}$} \\
Serial Toffoli / step (all sites)
  & $5\times10^{6}$--$5\times10^{7}$
  & $3\times10^{8}$--$3\times10^{9}$
  & $3\times10^{9}$--$3\times10^{10}$
  & $3\times10^{11}$--$3\times10^{12}$ \\
Estimated $T$-count / site / step (assuming $7T$ per Toffoli-eq.)
  & \multicolumn{4}{c}{$7\times10^{4}$--$7\times10^{5}$} \\
Serial-depth upper estimate / site / step
  & \multicolumn{4}{c}{$10^{4}$--$10^{5}$} \\
Serial logical depth / step (all sites)
  & $5\times10^{6}$--$5\times10^{7}$
  & $3\times10^{8}$--$3\times10^{9}$
  & $3\times10^{9}$--$3\times10^{10}$
  & $3\times10^{11}$--$3\times10^{12}$ \\
Qubit resets / site / step (obsolete bank)
  & \multicolumn{4}{c}{$19b_{\mathrm{pop}}=513$} \\
Algorithmic intermediate measurements / step
  & \multicolumn{4}{c}{$0$} \\
Final readout measurements
  & \multicolumn{4}{c}{$W_{\mathrm{pop}}$} \\
\bottomrule
\end{tabular}

\vspace{0.35em}
{\footnotesize
A full-bank reset is $W_{\mathrm{pop}}$ qubits.
The square-root cost is the adopted arithmetic model: each of the $J_{\mathrm{s}}$ outer Newton steps uses a $J_{\mathrm{r}}$-iteration reciprocal.
LES local-control arithmetic sits toward the upper end of the per-site Toffoli band.
Persistent two-bank width and serial $N^{d}$ depth both constrain feasibility.
Logical reset may compile as hardware-specific measurement and feedforward.
The algorithm measures at the final readout.
}
\end{table}

\clearpage
\section{Discussion and limitations}
Linear and advection--diffusion QLBMs use amplitude or linearized encodings.
Wawrzyniak et al.\ and Tiwari et al.\ treat ADE with measurement and reinitialization requirements that depend on the update
\cite{wawrzyniak2025qlbm,tiwari2025realizable}.
Nagel and L{\"o}we iterate linear ADE for multiple steps without mid-run extraction or reinitialization
\cite{nagel2026multistep}.
Carleman embeddings lift the nonlinear collision into a larger linear space
\cite{sanavio2024carleman,bastida2026carleman,khan2026lindblad}.
Ensemble and denoising methods combine a different state representation with sampling or intermediate reconstruction
\cite{wang2025nonlinear,duong2026denoising}.
The present architecture evaluates a deterministic digital timestep by reversible arithmetic into a clean bank and an unconditional reset of the obsolete source.

Signed-rail amplitude damping is a related CPTP LBM primitive, with a different encoding and a different dissipation mechanism
\cite{khan2026deterministic}.

In exact real arithmetic, \cref{eq:commuting} identifies the ideal D3Q19
map with the matched classical forced MRT map, including LES
when both sides evaluate the same discrete $G$.
The million-step calculations support that identity statistically through
mass, energy budgets, spectra, dissipation, and intermittency
(\cref{tab:post-transient}).
Chaotic trajectories still decorrelate under truncation-scale perturbations
\cite{pope2000turbulent}.

The D3Q19 calculations use the register emulator of the circuit.
D2Q9 is the compiled gate-level check on a reduced lattice
(\cref{app:d2q9}).
Persistent two-bank width $W_{\mathrm{state}}=2Q N^{d}b_{\mathrm{pop}}$
is prohibitive at production scale, and serial full-lattice depth is a
further constraint
(\cref{tab:resource-estimate}).
On $32^3$, $b=16$ collapses, while $b=24$, $32$, and $40$ remain in the
sustained kinetic-energy band
(\cref{sec:fixedpoint-results,app:fixedpoint}).
Production HIT uses $b=24$ on the large grids.

\section{Conclusion}
We constructed a complete digital-register QLBM for the conventional
nonlinear MRT timestep.
Macroscopic recovery, nonlinear equilibrium, signed MRT collision, body
forcing, streaming, and state-dependent Smagorinsky LES control are
evaluated as fixed-point register operations.
Reversible evaluation into an alternating destination bank, followed by
unconditional source-bank reset, is a CPTP channel that composes without
intermediate decoding or classical state reinjection
(\cref{prop:cptp-reset,eq:boxed-algo}).
The D3Q19 calculations treat forced HIT in DNS and LES for $10^6$ timesteps
(Algorithm~\ref{alg:digital-timestep}) on the register emulator of the
circuit.
A separate D2Q9 implementation compiles the same compute--reset contract to
reversible gates under final measurement
(\cref{app:d2q9}).
Million-step matched calculations at Reynolds number 15000 show statistical
agreement across mass, energy budgets, spectra, and intermittency.
Production uses fractional width $b=24$.
Practical feasibility is controlled by two-bank persistent width
$W_{\mathrm{state}}=2Q N^{d}b_{\mathrm{pop}}$, serial full-lattice depth, and
the logical-qubit and reset costs of \cref{tab:resource-estimate}.

\section*{Data and code availability}
Data and code are available from the corresponding author upon reasonable request.

\appendix
\renewcommand{\thesection}{\Alph{section}}
\renewcommand{\theHsection}{appendix.\arabic{section}}
\renewcommand{\theHsubsection}{appendix.\arabic{section}.\arabic{subsection}}
\renewcommand{\theHsubsubsection}{appendix.\arabic{section}.\arabic{subsection}.\arabic{subsubsection}}
\renewcommand{\thesubsection}{\arabic{subsection}}
\counterwithin{equation}{section}
\renewcommand{\theequation}{\thesection\arabic{equation}}
\crefalias{section}{appendix}
\titleformat{\section}[block]
  {\centering\normalfont\normalsize\bfseries}
  {APPENDIX \thesection:}{0.6em}{\MakeTextUppercase}
\titleformat{\subsection}[hang]
  {\normalfont\normalsize\itshape}
  {\thesection\thesubsection.}{0.5em}{}
\section{Supplementary figures}
\label{app:si-figures}
\begin{figure}[H]
 \centering
 \includegraphics[width=\linewidth,height=0.62\textheight,keepaspectratio]{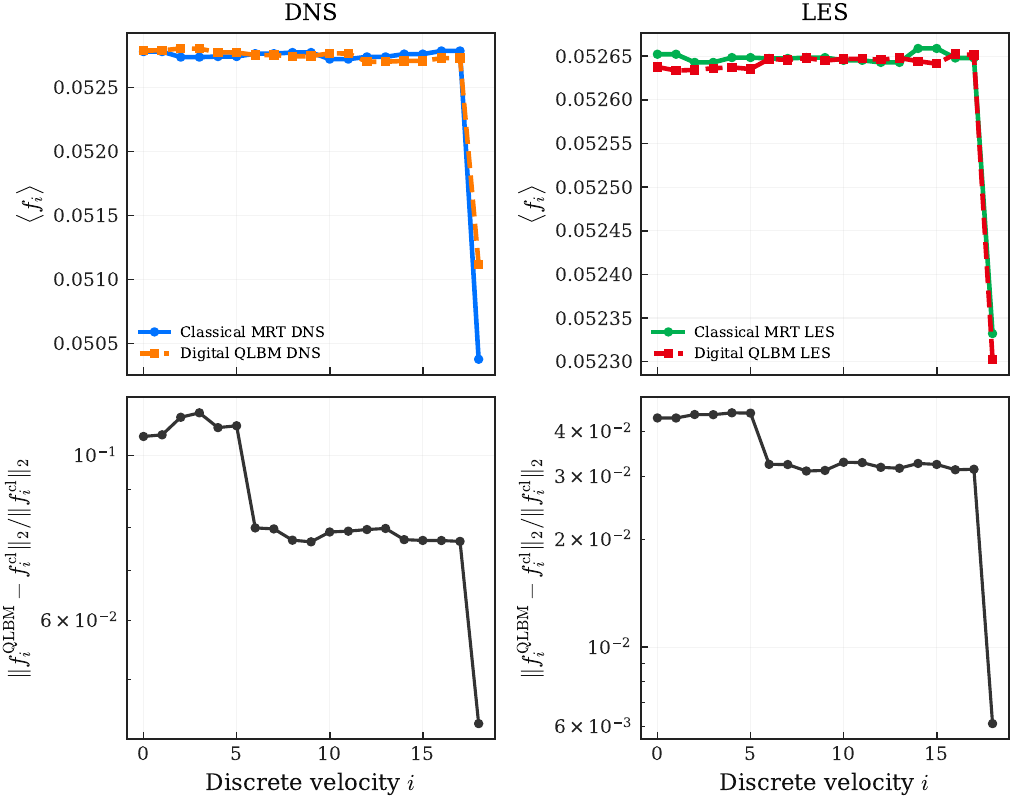}
 \caption{Final-state D3Q19 populations after $10^{6}$ lattice steps.
 Top, domain-mean $\langle f_i\rangle$ for classical MRT and digital QLBM-register
 solvers (left, DNS, right, LES).
 Vertical scales are independently zoomed for each regime.
 Bottom, per-component relative field difference
 $\|f_i^{\mathrm{QLBM}}-f_i^{\mathrm{cl}}\|_2/\|f_i^{\mathrm{cl}}\|_2$.
 The mean profiles nearly overlap, while the final spatial fields differ from a few $10^{-3}$ to $\mathcal{O}(10^{-1})$,
 consistent with long-time chaotic trajectory decorrelation.
 These final-state differences are distinct from the one-step digital-map error in \cref{tab:verification}.}
 \label{fig:population-comparison}
\end{figure}

\begin{figure}[H]
 \centering
 \includegraphics[width=0.72\linewidth]{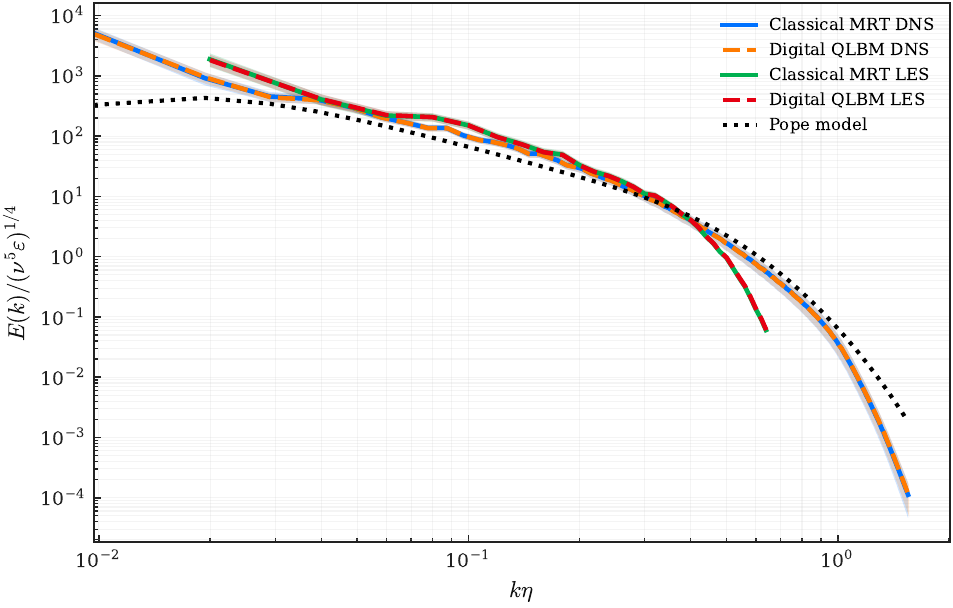}
 \caption{Pope-normalized spectra
 $E(k)/(\nu_0^{5}\varepsilon_{\mathrm{mol}})^{1/4}$ versus $k\eta$ for classical MRT and
 digital QLBM-register DNS and LES, compared with \cref{eq:pope}
 ($C_K=1.5$, $c_L=6.78$, $c_\eta=0.40$, $\beta=5.2$).
 The model curve uses the classical DNS scales $(L_{11},\eta,\varepsilon_{\mathrm{mol}})$
 in the Pope-form spectrum of \cref{eq:pope}.
 Shaded bands, sample standard deviation of $E(k)$.}
 \label{fig:pope}
\end{figure}

\begin{figure}[H]
 \centering
 \includegraphics[width=0.72\linewidth]{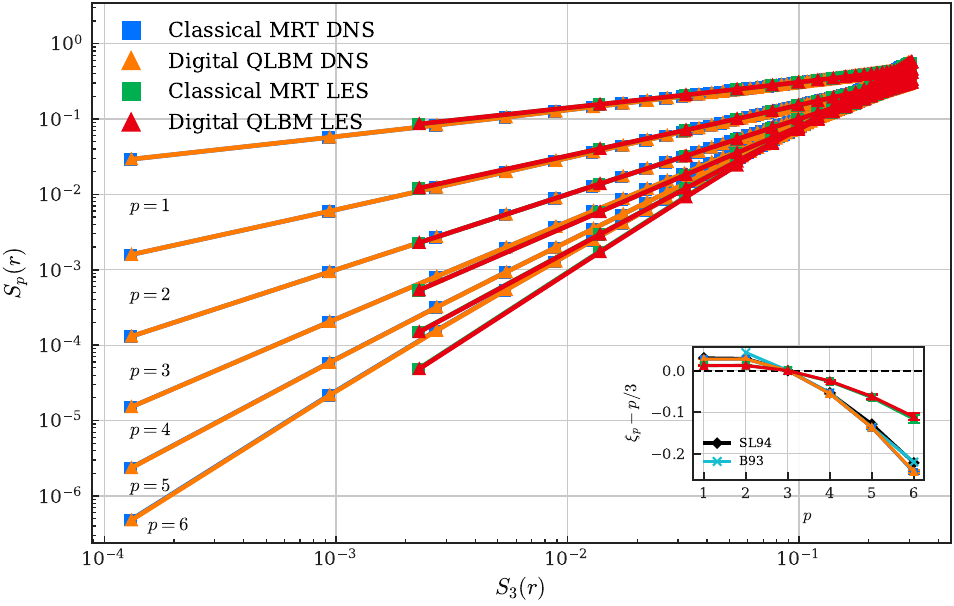}
 \caption{Extended self-similarity for classical MRT and digital QLBM-register
 DNS and LES, $S_p(r)$ versus $S_3(r)$ for $p=1,\ldots,6$
 (\cref{eq:ess}).
 Classical and register curves overlap within each regime.
 Inset, scaling anomalies $\xi_p-p/3$ versus She--Leveque
 \cite{she1994universal} and Benzi \cite{benzi1993ess} references.
 Low-order DNS and LES exponents remain close; at $p=4$--$6$ the LES
 anomalies are weaker than DNS, as expected for the filtered field.}
 \label{fig:ess}
\end{figure}

\section{Fixed-point register-width survey}
\label{app:fixedpoint}
We evaluate the digital map $\Phi_b$ on all register banks of the
DNS emulator and integrate a forced $32^3$ lattice for $2\times10^{5}$
steps at fractional widths $b=16$, $24$, $32$, and $40$, alongside a
matched floating-point reference on the same grid and forcing.
\Cref{tab:fixedpoint-robustness}
summarizes the reduced-grid survey
(and \cref{fig:fixedpoint-robustness}).
Mass drift at $b=40$ is $7.26\times10^{-9}$, against a floating-point
reference of $1.79\times10^{-11}$.
At $b=16$ the flow collapses, with final $K$ of $1.8\times10^{-7}$ against a
reference $9.8\times10^{-5}$.
Widths $b=24$, $32$, and $40$ remain in the sustained kinetic-energy band, with pre-chaotic
kinetic-energy errors $9.24\times10^{-3}$, $1.45\times10^{-5}$, and
$6.95\times10^{-8}$.
Production HIT uses $b=24$.
Final population $L^2$ differences remain near $2\times10^{-2}$ because
that column is a field-shape diagnostic after chaos has mixed the states.

\begin{figure}[H]
 \centering
 \includegraphics[width=\linewidth]{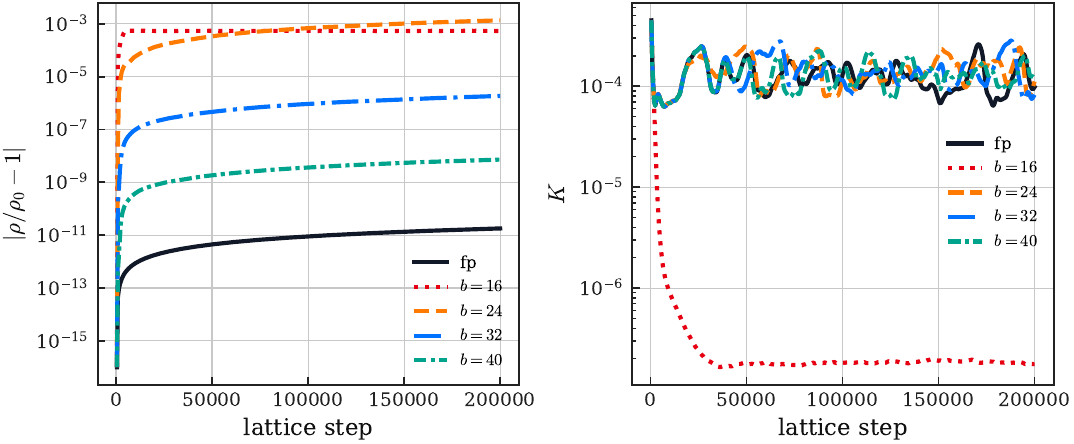}
 \caption{Fixed-point fractional-bit survey on $32^3$ forced DNS
 ($2\times10^{5}$ steps) versus floating point (fp).
 a~Relative mass drift $|\rho/\rho_0-1|$.
 b~Kinetic energy $K$.
 At $b=16$ the flow collapses.
 Widths $b=24$, $32$, and $40$ remain in the sustained kinetic-energy band.
 Pre-chaotic $K$ tracks the floating-point reference more closely as $b$
 increases.
 Production HIT uses $b=24$.}
 \label{fig:fixedpoint-robustness}
\end{figure}

\begin{table}[H]
\centering
\caption{Fixed-point fractional-bit survey on a $32^3$ forced DNS register calculation ($2\times10^{5}$ lattice steps).
Columns~3 and~4 are relative mean kinetic-energy errors $|\langle K\rangle_b-\langle K\rangle_{\mathrm{fp}}|/\langle K\rangle_{\mathrm{fp}}$ over the first $2\times10^{4}$ steps (pre-chaotic window) and the full horizon against the floating-point backend.
Column~5 is the final population relative $L^2$ difference $\|f_b-f_{\mathrm{fp}}\|_2/\|f_{\mathrm{fp}}\|_2$.
Long-horizon pairwise fields mix under chaos, so column~5 is a field-shape diagnostic.
Production HIT uses $b=24$.
This table is the reduced-grid width survey of that format.}
\label{tab:fixedpoint-robustness}
\small
\begin{tabular}{@{}ccccc@{}}
\toprule
 & & \multicolumn{2}{c}{$|\langle K\rangle_b-\langle K\rangle_{\mathrm{fp}}|/\langle K\rangle_{\mathrm{fp}}$} & \\
\cmidrule(lr){3-4}
$b$ & $\max|\Delta\rho/\rho_0|$ & $2\times10^{4}$ & full & $\|f\|_{\mathrm{rel}}$ \\
\midrule
16 & $5.55\times10^{-4}$ & $7.92\times10^{-1}$ & $9.82\times10^{-1}$ & $2.05\times10^{-2}$ \\
24 & $1.36\times10^{-3}$ & $9.24\times10^{-3}$ & $1.16\times10^{-1}$ & $2.21\times10^{-2}$ \\
32 & $1.87\times10^{-6}$ & $1.45\times10^{-5}$ & $1.20\times10^{-1}$ & $1.86\times10^{-2}$ \\
40 & $7.26\times10^{-9}$ & $6.95\times10^{-8}$ & $4.36\times10^{-2}$ & $2.87\times10^{-2}$ \\
fp & $1.79\times10^{-11}$ & n/a & n/a & n/a \\
\bottomrule
\end{tabular}
\end{table}

\section{Gate-level D2Q9 validation of the digital QLBM contract}
\label{app:d2q9}
D2Q9 is a second stencil of the same compute--reset architecture
(\cref{prop:cptp-reset}), with the moment basis of
\cite{lallemand2000theory}.
The discrete velocities and weights are
\begin{equation}
\begin{array}{c|ccccccccc}
i&0&1&2&3&4&5&6&7&8\\
\hline
c_{ix}&0&1&0&-1&0&1&-1&-1&1\\
c_{iy}&0&0&1&0&-1&1&1&-1&-1\\
w_i&\frac49&\frac19&\frac19&\frac19&\frac19&
\frac1{36}&\frac1{36}&\frac1{36}&\frac1{36}
\end{array}
\label{eq:d2q9-velocities}
\end{equation}
and
\begin{equation}
M=\begin{pmatrix}
1&1&1&1&1&1&1&1&1\\
-4&-1&-1&-1&-1&2&2&2&2\\
4&-2&-2&-2&-2&1&1&1&1\\
0&1&0&-1&0&1&-1&-1&1\\
0&-2&0&2&0&1&-1&-1&1\\
0&0&1&0&-1&1&1&-1&-1\\
0&0&-2&0&2&1&1&-1&-1\\
0&1&-1&1&-1&0&0&0&0\\
0&0&0&0&0&1&-1&1&-1
\end{pmatrix},
\label{eq:d2q9-M}
\end{equation}
with moment order $(\rho,e,\epsilon,j_x,q_x,j_y,q_y,p_{xx},p_{xy})$.
The collide--force--stream and bank-reset map is compiled to
executable logical Qiskit circuits from modular adder and multiplier
blocks, on a lattice small enough that the arithmetic remains simulable.
The three-dimensional HIT evidence is the D3Q19 register emulator of
Algorithm~\ref{alg:digital-timestep}.

Verification uses exact computational-basis simulation.
The basis simulator applies those adder and multiplier blocks by their integer semantics.
Every gate in $U_{\Phi}$ is a permutation of
computational-basis states, so propagating a single basis state reproduces
the circuit output exactly, at a cost of megabytes.
A tensor-network statevector of the same uncomputed workspace would need
tens of gigabytes.
That margin makes the last two checks in
\cref{tab:d2q9-validation} affordable.
The first is a two-timestep circuit, the shortest case that reuses a bank
after reset, and therefore tests the reset itself.
The second is the final state of every ancilla, the
direct witness that $U_{\Phi}$ is reversible.
In all cases the arithmetic workspace returns to $\ket{0}$.
The reset count equals one source population bank per timestep, and for LES also the obsolete shear-rate word that seeds the next Picard step.
For D2Q9 LES the digital word $x$ includes that persistent rate, so both resets are of the form assumed by \cref{prop:cptp-reset}.
Populations are encoded at the initial condition only.
A reversible digital $U_{\Phi}$ is evaluated into a clean destination bank
(MRT $\lambda$-multiply, body forcing when $A\neq0$, and the D2Q9 Chai/Zhang shear control of \cref{sec:d2q9-les} for LES).
The obsolete source bank is then unconditionally reset, and measurement occurs at the end.
\Cref{fig:qlbm-register-dns-les,fig:qlbm-register-lifetime} show that logical contract.
A hardware compilation of reset may use mid-circuit measurement and a corrective $X$.

\subsection{D2Q9 force-corrected LES strain}
\label{sec:d2q9-les}
Yu et al.\ reconstruct strain from nonequilibrium moments on D3Q19 \cite{yu2006mrtles}.
The gate-level D2Q9 LES uses the two-dimensional MRT stress reductions of Chai \cite{chai2012mrtstrain} and Zhang et al.\ \cite{zhang2018strain}, applied to force-corrected nonequilibrium moments.
Let $\mathbf{\Phi}_m=M\mathbf{\Phi}_f$ be the discrete force source in moment space, distinct from the Cartesian body force $(F_x,F_y)$.
The force-corrected nonequilibrium is
\begin{equation}
\widetilde{\delta\mathbf{m}}
=
\mathbf{m}-\mathbf{m}^{\mathrm{eq}}
+\frac{\Delta t}{2}\mathbf{\Phi}_m.
\end{equation}
The D2Q9 components are then
\begin{equation}
\begin{aligned}
S_{xx}
&=
-\frac{s_e\widetilde{\delta m}_e
+3s_\nu\widetilde{\delta m}_{p_{xx}}}{4\rho\Delta t},
\\
S_{yy}
&=
-\frac{s_e\widetilde{\delta m}_e
-3s_\nu\widetilde{\delta m}_{p_{xx}}}{4\rho\Delta t},
\\
S_{xy}
&=
-\frac{3s_\nu\widetilde{\delta m}_{p_{xy}}}{2\rho\Delta t},
\end{aligned}
\end{equation}
with $|S|=\sqrt{2(S_{xx}^2+S_{yy}^2+2S_{xy}^2)}$ and $\nu_t=(C_s\Delta)^2|S|$.
The unknown $s_\nu$ enters the strain formulas and is also $1/\tau_{\mathrm{eff}}$ after $\nu_t$ is formed, so the closure is implicit in $s_\nu$.
It is resolved by one Picard iteration per timestep, seeded by the previous shear rate and by $1/\tau_0$ at $t=0$.
On a Taylor--Green D2Q9 field after one DNS step the relative gap between that single Picard update and a twenty-iteration residual is $\mathcal{O}(10^{-4})$.
The classical D2Q9 solver, the integer twin, and the Qiskit circuit implement these same equations.
The D3Q19 production LES reconstructs strain with the molecular rate $1/\tau_0$ and does not use this Picard step.

\subsection{Arithmetic contract of the gate-level circuits}
\label{sec:d2q9-contract}
The D2Q9 datapath follows \cref{eq:fp-multiply} in its rounding except at
exact halfway cases, where the circuit rounds upward.
The signed multiply forms the exact product of the sign-extended operands,
adds $2^{f-1}$ to that product, and then copies the slice
$[f,f+w)$ into a clean output word.
The added constant turns the arithmetic
shift from a floor into a round, which removes the low bias of one unit in
the last place that a bare truncation would apply to every product.
Ties are resolved
upward here, which differs from the main-text
half-away-from-zero convention only on exactly representable halfway cases.
The rounding constant is added to the product workspace rather than to the
output word, so the multiply block remains its own inverse and can be used
directly as the uncompute of an earlier multiply.
Overflow uses two's-complement wraparound.
The production map of \cref{eq:fp-multiply} saturates.
In-circuit clamping would cost
a comparison and a controlled overwrite at every arithmetic stage.
The matched integer twin carries the same wraparound and is checked against the
declared word range, so a build that would have saturated is rejected.
At the widths in \cref{tab:d2q9-validation} every
stage stays inside the declared range, so the two conventions coincide on every
circuit listed.

\Cref{tab:d2q9-validation} reports the gate-level suite.
Every listed circuit is bit-exact against a matched fixed-point integer twin of the gate map.
DNS and LES with $A=0$ validate the D2Q9 collide/reset arithmetic and the
bank-reuse contract on $1\times1$ lattices, where periodic streaming is the
identity permutation
(LES includes the in-circuit D2Q9 Chai/Zhang strain of \cref{sec:d2q9-les}).
A separate DNS circuit with nonzero force amplitude validates forcing in-circuit
(\cref{fig:d2q9-forced-dns}).
These results confirm that the architecture is realizable as a
reversible gate-level $U_{\Phi}$ with unconditional reset of the obsolete
bank, on a stencil other than D3Q19.
The $320^3$/$64^3$ HIT comparisons are the long-horizon register
implementation of that map.

\begin{table}[H]
\centering
\caption{Gate-level D2Q9 validation of the same digital compute--reset contract
as \cref{eq:boxed-algo,prop:cptp-reset}, with encoding at the initial condition
only, a reversible digital $U_{\Phi}$, an unconditional reset of the
obsolete source bank, and measurement only at the end.
All circuit outcomes are bit-exact against the matched fixed-point integer twin.
The $T{=}2$ row is the first case that reuses a bank after resetting it.
The clean-workspace column records that every arithmetic ancilla returns to $\ket{0}$.
LES also keeps a persistent shear-rate word, reset with the obsolete source bank, which seeds the next Picard step.
D2Q9 is the gate-level check on a second stencil.
Long-horizon HIT remains the D3Q19 register-emulator evidence.}
\label{tab:d2q9-validation}
\small
\setlength{\tabcolsep}{3.5pt}
\begin{tabular}{@{}lccccccccc@{}}
\toprule
Case & $(n_x,n_y)$ & $T$ & $(w,f)$ & $A$ & Qubits & Resets & Bit-exact & Clean \\
\midrule
DNS ($F{=}0$) & $1{\times}1$ & 1 & $(7,4)$ & $0$ & 701 & 63 & yes & yes \\
LES ($F{=}0$) & $1{\times}1$ & 1 & $(7,4)$ & $0$ & 736 & 70 & yes & yes \\
DNS (forced) & $1{\times}1$ & 1 & $(8,5)$ & $0.25$ & 799 & 72 & yes & yes \\
DNS ($F{=}0$, $T{=}2$) & $1{\times}1$ & 2 & $(7,4)$ & $0$ & 701 & 126 & yes & yes \\
\bottomrule
\end{tabular}

\vspace{0.4em}
{\footnotesize
All cases use $\tau=0.8$ and a Taylor--Green initial condition with $u_0=0.05$,
verified on an exact computational-basis simulator, with measurement after
the final timestep.
All listed circuits are $1{\times}1$, so periodic streaming is the identity.
Forced DNS uses a uniform body force on the single cell, since the sinusoidal
sample vanishes at the $1{\times}1$ centre.
Logical depths and sizes are DNS ($F{=}0$) $6854$/$12846$, LES ($F{=}0$) $24899$/$43738$, DNS (forced) $19701$/$45360$, DNS ($F{=}0$, $T{=}2$) $13706$/$25622$.
The largest absolute population difference against classical float64 from the
same quantized initial condition is DNS ($F{=}0$) $0.030$, LES ($F{=}0$) $0.030$, DNS (forced) $0.022$, DNS ($F{=}0$, $T{=}2$) $0.148$.
The forced overlay in \cref{fig:d2q9-forced-dns} uses the same forced-DNS case.
}
\end{table}

\begin{figure}[H]
 \centering
 \includegraphics[width=0.96\linewidth]{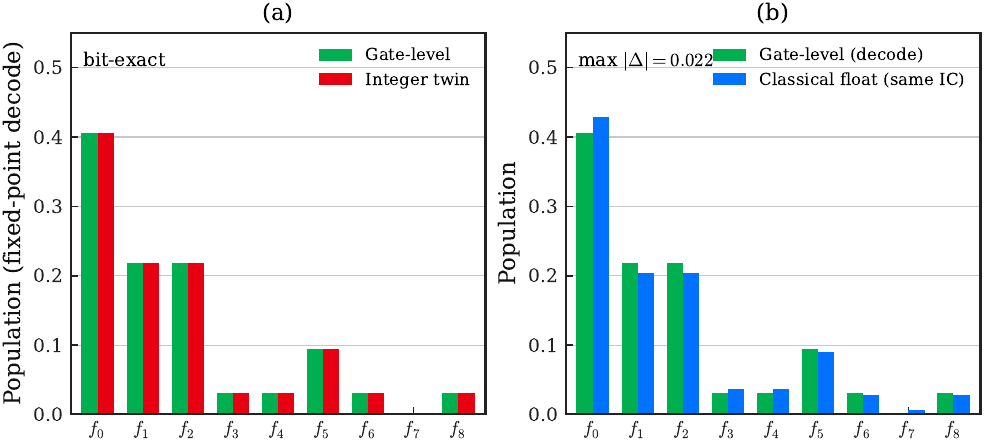}
 \caption{Forced D2Q9 gate-level DNS one-step check ($1{\times}1$, $T=1$, $(w,f)=(8,5)$, $A=0.25$).
 (a)~Decoded circuit populations coincide with the matched fixed-point integer twin (bit-exact).
 (b)~Same circuit decode versus classical float64 collide--force--stream from the identical quantized initial condition.
 The annotated $\max|\Delta|=0.022$ is that float64 gap.
 Streaming is trivial on this $1\times1$ lattice.}
 \label{fig:d2q9-forced-dns}
\end{figure}

\bibliographystyle{unsrtnat}
\bibliography{references}
\end{document}